	\newcommand{\id}{\mathbf{1}}
  \def\cup{\cupprod}
  \def\cap{\capprod}
  \def\bigcup{\bigcupprod}
  \def\bigcupdisjoint{\mathop{\kern10pt\raisebox{4pt}{$\cdot$}\kern-12pt\bigcup}\limits}
	 \newcommand{\id}{\ensuremath{\mathds{1}}}
	\let\wtilde\widetilde
\let\epsilon\varepsilon
\newcommand{\eps}{\epsilon}
\renewcommand{\Im}{\ensuremath{\operatorname{Im}}}
\renewcommand{\emptyset}{\ensuremath{\varnothing}}
\DeclareMathOperator{\Tr}{Tr}
\DeclareMathOperator{\Int}{Int}
\DeclareMathOperator{\sign}{sgn}
\DeclareMathOperator*{\esssup}{ess\,sup}
\DeclareMathOperator*{\essinf}{ess\,inf}
\DeclareMathOperator{\supp}{supp}
\DeclareMathOperator{\dist}{dist}
\numberwithin{equation}{section}
\newtheoremstyle{ttheorem}%
       {1.8ex\@plus1ex}                % space above
       {2.1ex\@plus1ex\@minus.5ex}      % space below
       {\itshape}           % body font
       {0pt}                   % indent amount
       {\bfseries}          % Theoremhead font
       {.}                  % Punctuation after theorem head
       {.5em}               % Space after theorem head
       {}                % Theorem head spec (can be left empty: `normal')
\newtheoremstyle{ddefinition}%
       {1.8ex\@plus1ex}                % space above
       {2.1ex\@plus1ex\@minus.5ex}      % space belowi
       {}           % body font
       {0pt}                   % indent amount
       {\bfseries}           % Theoremhead font
       {.}                  % Punctuation after theorem head
       {.5em}               % Space after theorem head
       {}                % Theorem head spec (can be left empty: `normal')
\newtheoremstyle{rremark}%
       {1.8ex\@plus1ex}                % space above
       {2.1ex\@plus1ex\@minus.5ex}      % space belowi
       {\normalfont}        % body font
       {0pt}                   % indent amount
       {\bfseries}           % Theoremhead font
       {.}                  % Punctuation after theorem head
       {.5em}               % Space after theorem head
       {}                   % Theorem head spec (can be left empty: `normal')
\theoremstyle{ttheorem}
\newtheorem{theorem}{Theorem}[section]
\newtheorem{lemma}[theorem]{Lemma}
\newtheorem{corollary}[theorem]{Corollary}
\theoremstyle{ddefinition}
\theoremstyle{rremark}
\newtheorem{remark}[theorem]{Remark}
\newtheorem{myremarks}[theorem]{Remarks}
\newtheorem{myexamples}[theorem]{Examples}
\newenvironment{remarks}{\begin{myremarks}\begin{nummer}}%
    {\end{nummer}\end{myremarks}}
    {\end{nummer}\end{myexamples}}
\newcounter{numcount}
\newcommand{\labelnummer}{(\roman{numcount})}%
\providecommand{\showkeyslabelformat}[1]{\relax}        %	 for compatibility with package showkeys
\let\mysaveformat\showkeyslabelformat                   %
\def\myformat#1{\raisebox{-1.5ex}{\mysaveformat{#1}}}   %
\newenvironment{nummer}%
  {\let\curlabelspeicher\@currentlabel%
    \begin{list}{\textup{\labelnummer}}%
      {\usecounter{numcount}\leftmargin0pt%
        \topsep0.5ex\partopsep2ex\parsep0pt\itemsep0ex\@plus1\p@%
        \labelwidth2.5em\itemindent3.5em\labelsep1em%
      }%
    \let\saveitem\item%
    \def\item{\saveitem%
      \def\@currentlabel{\curlabelspeicher\kern.1em\labelnummer}}%
    \let\savelabel\label%
    \def\label##1{{\ifnum\thenumcount=1\let\showkeyslabelformat\myformat\fi\savelabel{##1}}%
										{\def\@currentlabel{\labelnummer}%
									 	\let\showkeyslabelformat\@gobble%	 for compatibility with package showkeys
									 	\savelabel{##1item}%
										}%
	   							}%
  }{\end{list}}%
\let\curlabelspeicher\@currentlabel%
    \let\saveitem\item%
    \def\item{\saveitem%
      \def\@currentlabel{\curlabelspeicher\kern.1em\labelnummer}}%
    \let\savelabel\label%
    \def\label##1{{\ifnum\thenumcount=1\let\showkeyslabelformat\myformat\fi\savelabel{##1}}%
										{\def\@currentlabel{\labelnummer}%
									 	\let\showkeyslabelformat\@gobble%	 	 for compatibility with package showkeys
									 	\savelabel{##1item}%
										}%
    							}%
\let\OldItem\item% remember the previous definition
\newcommand{\MyItem}[2][]{}%
\newenvironment{MyDescription}[1][]{%
    \renewcommand{\item}[2][]{%
        \begin{enumerate}[#1,label={##1},ref={##1}]%
            \OldItem {##2}%
        \end{enumerate}%
    }%
}{%
}%
\def\section{\@startsection{section}{1}%
  \z@{1.3\linespacing\@plus\linespacing}{.5\linespacing}%
  {\normalfont\bfseries\centering}}
\def\subsection{\@startsection{subsection}{2}%
  \z@{.8\linespacing\@plus.5\linespacing}{-1em}%
  {\normalfont\bfseries}}
\def\nlsubsection{\@startsection{subsection}{2}%
  \z@{.8\linespacing\@plus.5\linespacing}{.1ex}%
  {\normalfont\bfseries}}
\let\@afterindenttrue\@afterindentfalse%
\renewenvironment{proof}[1][\proofname]{\par \normalfont
  \topsep6\p@\@plus6\p@ \trivlist %\itemindent\normalparindent
  \item[\hskip\labelsep\scshape
    #1\@addpunct{.}]\ignorespaces
}{%
  \qed\endtrivlist
}
\def\ps@firstpage{\ps@plain
  \def\@oddfoot{\normalfont\scriptsize \hfil\thepage\hfil
     \global\topskip\normaltopskip}%
  \let\@evenfoot\@oddfoot
  \def\@oddhead{
    \begin{minipage}{\textwidth}
      \normalfont\scriptsize
      \emph{\insertfirsthead}
    \end{minipage}}
  \let\@evenhead\@oddhead % in case an article starts on a left-hand page
}
\def\insertfirsthead{}
\def\@cite#1#2{{%
 \m@th\upshape\mdseries[{#1}{\if@tempswa, #2\fi}]}}
\renewcommand{\H}{\mathcal{H}}
\newcommand{\C}{\mathbb{C}}
\newcommand{\N}{\mathbb{N}}
\newcommand{\PP}{\mathbb{P}}
\newcommand{\R}{\mathbb{R}}
\newcommand{\Z}{\mathbb{Z}}
\renewcommand{\le}{\leqslant}
\renewcommand{\ge}{\geqslant}
\renewcommand{\leq}{\leqslant}
\renewcommand{\geq}{\geqslant}
\providecommand{\wtilde}[1]{\widetilde{#1}}
\newcommand{\1}{1}
\newcommand{\upd}{\mathrm{d}}
\renewcommand{\d}{\upd}   % Physiker-Version
\newcommand{\hairspace}{\kern .04167em}
\renewcommand{\S}{\mathcal{S}}
\newcommand{\Zd}{\mathbb{Z}^d}
\newcommand{\Rn}{\mathbb{R}^d}
\DeclareMathOperator{\TextIm}{Im}
\renewcommand{\Im}{\TextIm}
\newcommand{\beq}{\begin{equation}}
\newcommand{\eeq}{\end{equation}}
\newcommand{\be}{\begin}
\newcommand{\e}{\end}
\newcommand{\E}{\mathbb E}
\def\clap#1{\hbox to 0pt{\hss#1\hss}}
\begin{document}

%\firsthead{Version of \today\ generated from file \jobname\ at \currenttime}

\title[Lower bound on the density of states]{A lower Wegner estimate and bounds on the spectral shift function for continuum random Schr\"odinger operators}

\author[M.\ Gebert]{Martin Gebert}
\address[M.\ Gebert]{School of Mathematical Sciences, Queen Mary University of London, Mile End
Road, London E1 4NS, UK}
\email{m.gebert@qmul.ac.uk}
\thanks{Supported in part by the European Research Council starting grant SPECTRUM (639305)}

\begin{abstract}
We prove a strictly positive, locally uniform lower bound on the density of states (DOS) of continuum random Schr\"odinger operators on the \textit{entire spectrum}, i.e. we show that the DOS does not have a zero within the spectrum. 
This follows from a \textit{lower} Wegner estimate for finite-volume continuum random Schr\"odinger operators. We assume throughout iid random variables and the single-site distribution having a Lebesgue density bounded from below on its support.
 The main mathematical novelty in this paper are pointwise-in-energy bounds on the expectation of the spectral shift function at all energies for these
operators where we mainly focus on perturbations corresponding to a change from Dirichlet to
Neumann boundary conditions along the boundary of a cube. We show that the bound scales with the area of the hypersurface where the boundary conditions are changed. We also prove bounds on the averaged spectral shift function for perturbations by bounded and compactly supported multiplication operators. 
\end{abstract}
% \subjclass[2010]{Primary 82B44; Secondary 47B80, 60H25, 81Q10.}

\maketitle

\section{Introduction and results}

Wegner estimates are among the most frequently used tools in the theory of random Schr\"odinger operators. For example, they imply that the \textit{integrated density of states} (IDOS) of random Schr\"odinger operators with a suitably regular single-site probability distribution is Lipschitz continuous and therefore has a bounded density, which is called the \textit{density of states} (DOS), see \cite{MR2307751, MR2378428} for reviews. In his celebrated work \cite{Wegner} Wegner proved an upper and a positive lower bound on the DOS for the Anderson model on $\Z^d$. His initial motivation for proving such bounds was to clarify that a possible phase transition in the spectrum of such operators is not associated with a singularity or zero of the DOS.
In the following, the upper bound attracted a lot of attention in the mathematical physics community as it turned out to be useful for proving localisation for random Schr\"odinger operators via the multiscale analysis method, see \cite{AizWarBook} for the history of localisation. 

In this paper we consider (alloy-type) continuum random Schr\"odinger operators, see Section~\ref{themodel} for precise definitions. For these operators an optimal upper Wegner estimate is obtained in \cite{MR2362242}. On the other hand, a lower Wegner estimate and a strictly positive lower bound on the DOS for these models were only obtained recently in the region of \textit{complete localisation} in \cite{doi:10.1093/imrn/rnx092}.   However, for the Anderson model on $\Z^d$ a positive lower bound on the DOS holds on the \textit{entire spectrum}. This was proved in \cite{Jeske92,artRSO2008HiMu}  using the ideas of \cite{Wegner}.
In this paper we show that localisation is not necessary for a lower Wegner estimate to hold for continuum random Schr\"odinger operators as well. More precisely, we show a lower Wegner estimate which implies strict positivity of the DOS on the \textit{entire} spectrum except possibly very close to band edges, see Theorem \ref{th:LowerWegner} and Corollary \ref{cor:LowerBoundDOS}, where the constant in the lower Wegner estimate is locally uniform in energy. Our strategy of proof is the same as in \cite{doi:10.1093/imrn/rnx092} which relies on a Dirichlet-Neumann bracketing argument and controlling the resulting error for the spectral shift function. At this point the restriction to the region of \textit{complete localisation} was necessary in \cite{doi:10.1093/imrn/rnx092}. 
 
The main new ingredient and mathematical novelty presented in this paper are quantitative bounds on the averaged finite-volume spectral shift function for continuum random Schr\"odinger operators. In Theorem
 \ref{lem:SSFEstimate} we obtain new pointwise-in-energy bounds
on the averaged finite-volume spectral shift function corresponding to a change from Dirichlet to
Neumann boundary conditions along the boundary of a finite volume. Our bound is optimal in the sense that it is proportional to the area of the hypersurface where the boundary conditions are changed and is uniform in the finite-volume restriction. 
Compared to earlier results obtained in \cite{doi:10.1093/imrn/rnx092}, which relied on exponential decay of fractional moments of resolvents,  
the proof here relies 
only on a priori fractional moment bounds for continuum random Schr\"odinger operators established in \cite{artRSO2006AizEtAl2}.
We also show pointwise-in-energy bounds for the finite-volume spectral shift function for perturbations by bounded and compactly supported potentials, see Theorem \ref{thmgenW}. Bounds on the spectral shift function in the context of multi-dimensional Schr\"odinger operators have a long history and we refer to Remark \ref{ssf-problems} for related results.

\subsection{The model}\label{themodel}
We consider random Schr\"odinger operators with an alloy-type random potential of the form
\begin{equation}
\label{eq:TheOperator}
\omega \mapsto  H_0+V_\omega := H_0 + \sum_{k\in\Zd} \omega_k u_k
\end{equation}
acting on a dense domain in the Hilbert space $L^{2}(\Rn)$ for $d\in\N$. $H_0$ is a non-random self-adjoint operator and $\omega \mapsto V_{\omega}$ is a random potential subject to: 
\be{MyDescription}
\item[(K)]{ The unperturbed operator is given by $H_0 := -\Delta+V_0$ with $-\Delta$ being the
	non-negative Laplacian on $d$-dimensional Euclidean space $\Rn$ and $V_0 \in L^{\infty}(\R^{d})$  is a deterministic, $\Zd$-periodic and bounded
	background potential. We also assume $H_0\ge 0$.   \label{assK}}
\item[(V1)]{ The random coupling constants $\omega:=(\omega_k)_{k\in\Zd} \in \R^{\Z^{d}}$ are
	identically and independently distributed according to the Borel probability measure $\PP := \bigotimes_{\Z^{d}}P_{0}$
	on $\R^{\Z^{d}}$.	 The single-site distribution $P_0$ is absolutely continuous with respect to Lebesgue measure on $\R$. The corresponding Lebesgue density
 $\rho$ is bounded with support $\supp(\rho) \subseteq [0,1]$. We denote the corresponding expectation by $\E[\cdot]$.
 	\label{assV1}  }
\item[(V2)]{ The single-site bump functions $u_k(\,\cdot\,):=u(\,\cdot\,-k)$, $k\in\Zd$, are translates
	of a non-negative bounded function $0 \le u\in L^\infty_{c}(\R^d)$ supported in a ball of radius
	$R_u>0$. Moreover, we assume a covering condition, i.e. there exist  $C_{u,-}, C_{u,+}>0$ such that
	\begin{equation}
		\label{uk-bounds}
		0< C_{u,-} \leq \sum_{k\in\Zd} u_k \leq C_{u,+}<\infty.
	\end{equation}
	\label{assV2}
	}

\end{MyDescription}

We note that the condition $\supp(\rho) \subseteq [0,1]$ in (V1) is not stronger than the seemingly weaker property $\supp(\rho)$ is compact. In fact, the former can be obtained from the latter
via the inclusion of an additional periodic potential, a change of variables of the random couplings $(\omega_k)_{k\in\Z^{d}}$ and by rescaling the single-site potential $u$.	The random potential
	$V$ need not even be of the precise form \eqref{eq:TheOperator}, as $\Z^{d}$-translation
	invariance is not necessary for most of the arguments which do not involve the existence of the IDS.

The above model is $\Zd$-ergodic with respect to lattice translations. It follows that there exists a closed set $\Sigma\subset\R$, the non-random spectrum of $H$, such that $\Sigma=\sigma(H)$ holds $\mathbb{P}$-almost surely \cite{pastfig1992random}. We drop the subscript $\omega$ from $H$ and other quantities when we think of these quantities as random variables.% (as opposed to their particular realisations).
The covering conditions \eqref{uk-bounds} imply the spectral inclusion
\begin{equation}
 	\Sigma_{0} + [0, C_{u,-}] \subseteq \Sigma \subseteq \Sigma_{0} + [0, C_{u,+}],
\end{equation}
where $\Sigma_{0} := \sigma(H_{0})$, see \cite{pastfig1992random}.

Given an open set $G\subset\Rn$, we write $H_G$ for the Dirichlet restriction of $H$ to $G$.
We define the random finite-volume eigenvalue counting function
\beq
 \R \ni E \mapsto N_L(E):= \Tr \left(\id_{(-\infty,E]}(H_L)\right)
\eeq
for $L>0$. Here, $\id_{B}$ is the indicator function of a Borel set $B\subset \R$, $H_L:=H_{\Lambda_L}$ and $\Lambda_L:=(-L/2,L/2)^{d}$ is the open cube about the origin of side-length $L$. We also define $\Lambda_L(x_0):= x_0 + \Lambda_L$.  A Wegner estimate holds under our assumptions: given a bounded interval $I\subset\R$ and $E_1,E_2\in I$ with $E_1<E_2$, the estimate
\beq
\label{eq:WegnerEst11}
\mathbb{E}\big[N_L(E_2) - N_L(E_1)\big] \le C_{W,+}(I) |\Lambda_L| (E_2-E_1)
\eeq
holds for all $L>0$, where $C_{W,+}(I)$ is a constant which is polynomially bounded in $\sup I$, and
$|B|$ is the Lebesgue measure of a Borel set $B\subseteq \Rn$.
We refer to \cite{MR3046996,MR3106507,MR3749376} for recent trends concerning Wegner estimates.
Ergodicity implies, almost surely, that the limit
\beq
\label{eq:DOSFinVol}
N(E):= \lim_{L\to\infty} \frac 1 {|\Lambda_L|} N_L(E)
\eeq
exists for all $E\in\R$, see \cite{pastfig1992random}. The non-random limit function $N$ is called the integrated density of states (IDOS) of $H$, see e.g. the reviews \cite{MR2307751, MR2378428}. The Wegner estimate \eqref{eq:WegnerEst11} implies Lipschitz continuity of the IDOS $N$, which implies absolute continuity of $N$ with a bounded Lebesgue density $n$. The latter is the density of states (DOS) of $H$. The Wegner bound for the DOS implies
\begin{equation}
\label{eq:WegnerEst22}
\esssup_{E\in I} n(E) \leq C_{W,+}(I).
\end{equation}

\subsection{Lower Wegner estimate and strict positivity of the DOS}

One goal of this paper is to derive  lower bounds for the IDOS and DOS of alloy-type random Schr\"odinger operators that complement \eqref{eq:WegnerEst11} and \eqref{eq:WegnerEst22}.
To state the theorem, we need an additional assumption on the single-site probability density.
\be{MyDescription}
\item[(V1')]{ The single-site probability density is bounded away from zero on its support
	\begin{equation}
	  \rho_-:=\essinf_{\nu\in[0,1]}\rho(\nu) > 0.
	\end{equation}}
\end{MyDescription}

In the following we show a lower bound on the expectation of the eigenvalue counting function which we refer to as a \emph{lower} Wegner estimate. We use the notation $\Int(A)$ for the interior of a set $A\subset \R$.

\begin{theorem}
\label{th:LowerWegner}
Assume {\upshape(K)}, {\upshape(V1)}, {\upshape(V1')} and {\upshape(V2)}. Consider a compact energy interval $I\subset \Int(\Sigma_0+ [0,C_{u,-}])$. Then there exists a constant $C_{W,-}(I)>0$ and an initial length scale $L_0>0 $ such that
\beq
\label{eq:LowerWegnerStat}
\mathbb{E}\left[ \Tr\big(\id_{[E_1,E_2]}(H_L)\big) \right] \geq C_{W,-}(I) (E_2-E_1) |\Lambda_L|
\eeq
holds for all $E_1,E_2\in I$ with $E_1<E_2$ and all $L> L_0$.
\end{theorem}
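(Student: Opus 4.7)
The plan is to follow the Dirichlet--Neumann bracketing strategy announced in the introduction. First, partition $\Lambda_{L}$ into $M \asymp (L/\ell)^{d}$ disjoint sub-cubes $\Lambda_{\ell}(k_{i})$ of side-length $\ell$, where $\ell$ depends only on $I$ and is to be chosen in the last step. Set $H_{L}^{(D)} := \bigoplus_{i} H_{\Lambda_{\ell}(k_{i})}$; then $H_{L} \le H_{L}^{(D)}$ in the quadratic-form sense, and the Krein spectral shift function $\xi(E) := \xi(E; H_{L}^{(D)}, H_{L})$ is non-negative and satisfies
\[
\Tr \id_{[E_{1},E_{2}]}(H_{L}) = \sum_{i=1}^{M} \Tr \id_{[E_{1},E_{2}]}(H_{\Lambda_{\ell}(k_{i})}) + \xi(E_{1}) - \xi(E_{2}).
\]
The SSF contribution is controlled by Theorem~\ref{lem:SSFEstimate}, applied face-by-face through the Neumann bracket $H_{L}^{(N)}\le H_{L}\le H_{L}^{(D)}$ and the monotonicity $\xi(H_L^{(D)},H_L) \le \xi(H_L^{(D)},H_L^{(N)})$. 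This bounds $\E[\xi(E)]$ pointwise in $E$ by a constant depending on $I$ times the total internal surface area, which is of order $L^{d}/\ell$.

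The core new estimate is a single-box lower Wegner bound
\[
\E\bigbrackets{\Tr \id_{[E_{1},E_{2}]}(H_{\Lambda_{\ell}})} \ge c_{0}(I)\,(E_{2}-E_{1})\,\ell^{d}
\qquad (E_{1}<E_{2}\in I,\ \ell \ge \ell_{0}(I)).
\]
The assumption $I\subset\Int(\Sigma_{0}+[0,C_{u,-}])$ is used here: for every $E\in I$ there is some $s\in[0,C_{u,-}]$ with $E-s\in\Int(\spec(H_{0}))$, so the deterministic $\Zd$-periodic operator $H_{0}$---whose IDS is continuous and strictly increasing on $\Int(\spec(H_0))$---has of order $\ell^{d}$ Dirichlet eigenvalues on $\Lambda_{\ell}$ per unit of energy near $E-s$, by Weyl asymptotics. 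To bring in the randomness, I would then use a spectral averaging argument exploiting the covering condition (V2): a common shift $\omega_{k}\to\omega_{k}+t$ moves every eigenvalue of $H_{\Lambda_{\ell}}^{\omega}$ monotonically at rate in $[C_{u,-},C_{u,+}]$, and the density lower bound $\rho\ge\rho_{-}\id_{[0,1]}$ guarantees that the associated shifts carry positive $\PP$-weight. Integrating thereby transfers the eigenvalue density of $H_{0}$ to the random operator, losing only a constant depending on $I$.

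Combining these two inputs and summing over the $M$ sub-cubes yields an estimate of the form
\[
\E\bigbrackets{\Tr \id_{[E_{1},E_{2}]}(H_{L})} \ge c_{0}(I)(E_{2}-E_{1})L^{d} - C(I)L^{d}/\ell,
\]
and choosing $\ell = \ell(I)$ large enough (with $L_{0} := \ell(I)$) gives the claim with a new constant $C_{W,-}(I)$. The principal obstacle will be the single-box bound: coordinate-wise averaging over the order $\ell^{d}$ couplings in $\Lambda_{\ell}$ would incur the catastrophic loss $\rho_{-}^{\ell^{d}}$. The delicate step is to exploit (V1') together with the covering condition (V2) to realize an effective \emph{common} shift of many couplings at once, which sweeps order $\ell^{d}$ eigenvalues monotonically across any target subinterval $[E_{1},E_{2}]\subset I$ but costs only a constant factor in $\rho_{-}$.
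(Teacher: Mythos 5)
Your overall architecture --- Dirichlet--Neumann bracketing, control of the decoupling error via Theorem~\ref{lem:SSFEstimate}/Corollary~\ref{cor:SSF}, and a single-box comparison with the periodic operator $H_0$ using {\upshape(V1')} and {\upshape(V2)} --- is exactly the strategy of the paper (which imports the proof of \cite[Thm.~2.3]{doi:10.1093/imrn/rnx092} verbatim, with the new SSF bound as the missing input). But the way you assemble the two ingredients does not close. Your final inequality is $\E[\Tr\id_{[E_1,E_2]}(H_L)]\ge c_0(I)(E_2-E_1)L^d-C(I)L^d/\ell$, and the error term carries \emph{no} factor of $E_2-E_1$: Theorem~\ref{lem:SSFEstimate} bounds $\E[|\xi(E)|]$ at a fixed energy, so all you get for the boundary term is $\E[|\xi(E_1)-\xi(E_2)|]\le C(I)L^d/\ell$ uniformly in $E_2-E_1$. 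Since the theorem must hold for \emph{all} $E_1<E_2\in I$, in particular for $E_2-E_1$ arbitrarily small at fixed $L$, no choice of $\ell=\ell(I)$ (hence no uniform $L_0$) makes the main term dominate; you would be forced to take $\ell\gtrsim(E_2-E_1)^{-1}$, which destroys the uniformity of $L_0$ and degrades the single-box constant. The cure is to reverse the order of operations: first perform the spectral averaging to extract the factor $\rho_-(E_2-E_1)/C_{u,+}$, thereby reducing the problem to comparing eigenvalue counting functions at the two \emph{fixed} energies $E_1$ and (roughly) $E_2-C_{u,-}$; only then apply bracketing and the SSF bound, so that the $\mathrm{O}(L^d/\ell)$ error competes with a main term of order $L^d$ coming from the positivity of the IDOS of $H_0$ on $\Int(\Sigma_0)$ (which is where the hypothesis $I\subset\Int(\Sigma_0+[0,C_{u,-}])$ enters), rather than with $(E_2-E_1)L^d$.

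The second, related gap is that the averaging mechanism itself --- which you correctly identify as the principal obstacle --- is asserted rather than supplied, and the version you describe does not work as stated. A literal common shift $\omega_k\mapsto\omega_k+t$ of $n$ couplings translates the product measure, and comparing the translated density with the original costs a Radon--Nikodym factor exponential in $n$ (besides pushing the configuration outside $\supp(\PP)=[0,1]^n$); there is no ``constant cost in $\rho_-$'' for a genuine simultaneous shift. The device that yields a single power of $\rho_-$ (going back to \cite{Wegner,Jeske92,artRSO2008HiMu} and used in \cite{doi:10.1093/imrn/rnx092}) is different: by first-order perturbation theory and the covering condition one has $1\le C_{u,-}^{-1}\sum_k\partial_{\omega_k}E_n(\omega)$ with $\partial_{\omega_k}E_n=\langle\psi_n,u_k\psi_n\rangle$, and one then averages over \emph{one} coupling at a time against the weight $\partial_{\omega_k}E_n$, which turns the one-dimensional $\omega_k$-integral into the length of the energy interval swept by $E_n$; only the single factor $\rho(\omega_k)\ge\rho_-$ is estimated below, and the sum over $k$ restores the full speed $C_{u,-}$. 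Making this step explicit is essential: it is precisely what produces the prefactor $\rho_-(E_2-E_1)$ in front of the fixed-energy counting problem and thereby resolves the first gap as well.
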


\begin{corollary}
	\label{cor:LowerBoundDOS}
Assume {\upshape(K)}, {\upshape(V1)}, {\upshape(V1')} and {\upshape(V2)}. Consider a compact energy interval $I\subset  \Int(\Sigma_0+ [0,C_{u,-}])$.	 Then there exists a constant
	$C_{W,-}(I)>0$ such that
	\beq
		\label{eq:LowerBoundDOS}
		\essinf_{E\in I}\, n(E) \geq C_{W,-}(I).
	\eeq
\end{corollary}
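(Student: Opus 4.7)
The plan is to deduce Corollary \ref{cor:LowerBoundDOS} directly from Theorem \ref{th:LowerWegner} by passing to the thermodynamic limit and invoking the Lebesgue differentiation theorem; no further probabilistic input is required. All mathematical content sits in the lower Wegner estimate itself, and this corollary is essentially a book-keeping deduction.

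First, I would convert the spectral trace in \eqref{eq:LowerWegnerStat} into an increment of the finite-volume counting function. Since $H_L$ has purely discrete spectrum,
\[
\Tr\bigl(\id_{[E_1,E_2]}(H_L)\bigr) = N_L(E_2) - N_L(E_1) + \dim\ker(H_L - E_1),
\]
and the expected multiplicity of $E_1$ as an eigenvalue of $H_L$ vanishes, since by the upper Wegner estimate \eqref{eq:WegnerEst11} applied on the shrinking interval $[E_1-\delta, E_1]$ one has $\mathbb{E}[\dim\ker(H_L - E_1)] \leq \lim_{\delta \downarrow 0} C_{W,+}(I')\delta |\Lambda_L| = 0$ for any $I' \supseteq I$ containing $E_1$ in its interior. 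Combining with Theorem \ref{th:LowerWegner} gives $\mathbb{E}[N_L(E_2) - N_L(E_1)] \geq C_{W,-}(I)(E_2 - E_1)|\Lambda_L|$ for all $E_1 < E_2$ in $I$ and all $L > L_0$.

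Next, I would divide by $|\Lambda_L|$ and let $L \to \infty$. The expected finite-volume IDOS $\mathbb{E}[N_L(\cdot)]/|\Lambda_L|$ converges pointwise to the IDOS $N$; this is standard and uses only the ergodicity-based convergence \eqref{eq:DOSFinVol} together with the continuity of $N$ ensured by the upper Wegner estimate, see \cite{pastfig1992random}. The resulting macroscopic bound reads $N(E_2) - N(E_1) \geq C_{W,-}(I)(E_2 - E_1)$ for all $E_1 < E_2$ in $I$. Since $N$ is absolutely continuous with density $n$ by \eqref{eq:WegnerEst22}, this rewrites as $\int_{E_1}^{E_2} n(E)\, dE \geq C_{W,-}(I)(E_2 - E_1)$, and the Lebesgue differentiation theorem then delivers $n(E) \geq C_{W,-}(I)$ at every Lebesgue point of $n$ lying in $I$, which is exactly \eqref{eq:LowerBoundDOS}. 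I expect no genuine obstacle beyond the mildly delicate limit passage in expectation, which is classical.
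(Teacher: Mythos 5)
Your proposal is correct and is exactly the deduction the paper intends (the corollary is stated without a written proof precisely because it follows from Theorem \ref{th:LowerWegner} by this standard route): remove the possible point mass at $E_1$ via the upper Wegner estimate, pass to the macroscopic limit of $\mathbb{E}[N_L]/|\Lambda_L|$, and apply Lebesgue differentiation to the resulting Lipschitz lower bound on $N$. The only point worth making explicit is the uniform integrability (e.g.\ the deterministic Weyl-type domination $N_L(E)\le C(E)|\Lambda_L|$ following from $V_\omega\ge 0$) needed to upgrade the almost-sure convergence \eqref{eq:DOSFinVol} to convergence of expectations, but this is classical.
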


\begin{remarks}
\item
If $V_0=0$, $\Int(\Sigma_0+ [0,C_{u,-}]) = [0,\infty)$ and the latter results hold on the \textit{entire} almost sure spectrum. The same is true for $C_{u,-} =C_{u,+}$. The restriction to $\Int\big(\Sigma_0+ [0,C_{u,-}]\big)$ and the assumption {\upshape(V1')} originates in the strategy of proof in \cite{doi:10.1093/imrn/rnx092} which is also used here. It consists of giving a lower bound on the DOS of $H$ in terms of the IDOS of $H_0$ at certain energies. However, the argument fails for energies which are far away from the spectrum of the operator $H_0$.
\item 
Partially, the interest in lower bounds on the DOS stems from the occurrence of the DOS as the intensity of the Poisson point process describing level statistics of eigenvalues in the localised regime. 
Poisson statistics are well understood for the Anderson model on $\Z^d$ \cite{MR1385082}.
It was also believed to hold for continuum models on $\R^d$ as well \cite{MR2663411}. Very recently, using new ideas, this was established at the very bottom of the spectrum of continuum random Schr\"odinger operators in \cite{DietleinElgart}. 
\end{remarks}

The proof of a lower bound on the DOS for alloy-type continuum random Schr\"odinger operators
 requires detailed control of the spectral shift function for a perturbation by a boundary condition on a hypersurface. This control is the main achievement of this paper. 

\subsection{Bounds on the spectral shift function}
 Let $A,B,C$ be self-adjoint operators which are bounded from below and admit purely discrete spectrum. Then the spectral shift function (SSF) for $A$ and $B$ is defined by
\begin{equation}
\label{eq:DefSSF}
\xi\big(E,A,B\big) := \Tr \big(\id_{(-\infty,E]}(A)-\id_{(-\infty,E]}(B) \big).
\end{equation}
This definition makes sense and coincides with the abstract definition of the SSF \cite{yafaev1992mathscattering}.
Note that the spectral shift function is linear with respect to the perturbation, i.e.
\beq\label{linear}
\xi\big(E,A,B\big) = \xi\big(E,A,C\big) + \xi\big(E,C,B\big).
\eeq
For $L,l\in\mathbb{R}_{>0}$ and $x_{0} \in \Lambda_{L}$ such that $\overline{\Lambda_{l}(x_{0})}\subset\Lambda_L$ we write $H^{D/N}_{L,l}$, for the self-adjoint restriction of the operator $H$ to
$\Lambda_{L}\setminus\overline{\Lambda_l(x_{0})}$ with Dirichlet boundary conditions along the outer boundary $\partial\Lambda_{L}$ and Dirichlet boundary conditions, respectively Neumann boundary conditions along the inner boundary $\partial \Lambda_{l}(x_{0})$. Moreover, we denote by $H_L^{D/N}$ the restriction of $H$ to $\Lambda_L$ with Dirichlet, respectively Neumann boundary conditions.
Then we have the following bounds on the corresponding spectral shift functions.

\begin{theorem}
	\label{lem:SSFEstimate}
Assume {\upshape(K)}, {\upshape(V1)} and {\upshape(V2)}. Given a compact energy interval $I\subset \R$ then
\begin{itemize}
	\item[(i)]\label{i} there exists $C_1>0$ such that
	\begin{equation}
	\label{eq:SSFEstStat}
	\sup_{E\in I}\mathbb{E}\big[\big|\xi(E,H^{N}_{L,l},H^{D}_{L,l})\big|\big] \leq C_1 l^{d-1}
	\end{equation}
	holds for all $L >1$ and $l\ge 1$ such that $x_{0} \in \Lambda_{L}$, provided $\Lambda_l(x_{0})\subset \Lambda_L$ with $						\dist\big(\partial\Lambda_l(x_{0}), \partial \Lambda_L\big)\geq 1$.
	\item[(ii)] there exists $C_2>0$ such that
		\begin{equation}
		\sup_{E\in I}\mathbb{E}\big[\big|\xi(E,H^{N}_{L},H^{D}_{L})\big|\big] \leq C_2 L^{d-1}.
	\end{equation}
\end{itemize}	
\end{theorem}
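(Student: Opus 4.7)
The plan focuses on part (i); part (ii) follows from the same method applied to the outer boundary $\partial\Lambda_L$ instead of $\partial\Lambda_l(x_0)$, producing the claimed $L^{d-1}$ scaling with no extra work. Throughout the argument I use that, by Dirichlet--Neumann bracketing on the annular region $\Lambda_L\setminus\overline{\Lambda_l(x_0)}$, one has $H^D_{L,l}\ge H^N_{L,l}$ in the quadratic form sense, so the eigenvalue counting functions obey $N^N_{L,l}(E)\ge N^D_{L,l}(E)$ pointwise in $E$. Consequently $\xi(E,H^N_{L,l},H^D_{L,l})\ge 0$, and the absolute value in \eqref{eq:SSFEstStat} can be dropped.

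The next step is to decompose $\partial\Lambda_l(x_0)$ into $M\le C l^{d-1}$ unit-size faces $F_1,\dots,F_M$ and to build a telescoping chain of operators $H_0=H^D_{L,l},H_1,\dots,H_M=H^N_{L,l}$ in which $H_{j+1}$ arises from $H_j$ by switching the boundary condition on $F_{j+1}$ alone from Dirichlet to Neumann, keeping all other boundary conditions unchanged. By linearity of the SSF, \eqref{linear},
\begin{equation}\label{eq:telescope}
\xi(E,H^N_{L,l},H^D_{L,l})=\sum_{j=0}^{M-1}\xi(E,H_{j+1},H_j),
\end{equation}
and each summand is non-negative by the same face-by-face bracketing argument. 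It therefore suffices to prove a uniform single-face bound $\sup_{E\in I}\E[\xi(E,H_{j+1},H_j)]\le C$, with $C$ independent of $j$, $L$ and $l$, and then sum over $j$.

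The single-face bound is the technical heart of the proof. I would approximate the Dirichlet-to-Neumann switch on the single face $F_{j+1}$ by a family $W_n:=n\,\one_{T_n}$ of bounded, non-negative potentials supported on a thin slab $T_n$ of thickness $1/n$ abutting $F_{j+1}$ on the side on which $H_j$ has Dirichlet boundary conditions. The large barrier enforces Dirichlet-type vanishing at $F_{j+1}$ in the limit, so $H_{j+1}+W_n\to H_j$ in the strong resolvent sense as $n\to\infty$, and the monotone convergence $\xi(E,H_{j+1}+W_n,H_{j+1})\uparrow\xi(E,H_j,H_{j+1})$ transfers averaged bounds from the approximants to the limit. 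Applied to each $W_n$, Theorem~\ref{thmgenW} provides an averaged SSF bound scaling with $\|W_n\|_\infty\cdot|\supp W_n|=n\cdot O(n^{-1})=O(1)$, uniformly in $n$. The proof of Theorem~\ref{thmgenW} itself averages over the random couplings $\omega_k$ with $k$ near the support of the perturbation, and uses the a priori fractional moment bounds for continuum random Schr\"odinger operators of [AENSS06] to control the off-diagonal matrix elements of the relevant resolvents.

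The main obstacle lies precisely in the compatibility between the limit $n\to\infty$ and the pointwise-in-$E$ averaged SSF bound: for a single deterministic potential of height $n$ the classical pointwise-in-$E$ SSF estimates depend polynomially on $\|W\|_\infty$ and blow up under the approximation. The randomness is what makes the bound survive: averaging over the $\omega_k$'s with $k$ near $F_{j+1}$ smooths the pointwise singularities of $\xi$ at individual eigenvalues into a quantity controlled by off-diagonal resolvent matrix elements, which are uniformly bounded in the fractional-moment sense under (V1) and (V2) by [AENSS06]. Summing \eqref{eq:telescope} over the $M=O(l^{d-1})$ faces then yields $\sup_{E\in I}\E[\xi(E,H^N_{L,l},H^D_{L,l})]\le C_1 l^{d-1}$; the identical argument with $\partial\Lambda_L$ in place of $\partial\Lambda_l(x_0)$ proves (ii).
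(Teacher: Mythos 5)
Your outer structure coincides with the paper's: the decomposition of $\partial\Lambda_l(x_0)$ into $O(l^{d-1})$ unit hypercubes, the telescoping chain of mixed-boundary-condition operators, and the reduction via \eqref{linear} to a single-face bound uniform in all parameters are exactly what is done in Section~\ref{sec:SSFEst}. The observation that each summand is non-negative by bracketing is correct and a harmless simplification. The problem is that the single-face bound, which you correctly identify as the technical heart, is not actually supplied: the barrier approximation you propose does not work. With $W_n=n\,\one_{T_n}$ and $|T_n|=O(1/n)$ one has $\|W_n\|_\infty|\supp W_n|=O(1)$, and in that regime $H_{j+1}+W_n$ converges not to the Dirichlet operator $H_j$ but (at best) to $H_{j+1}$ plus a finite-strength $\delta$-interaction on the face. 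To enforce a Dirichlet condition in the large-coupling limit one needs the coupling times the thickness to diverge, which destroys precisely the $O(1)$ uniformity your argument relies on; and if instead the slab thickness is kept fixed while the height diverges, the limit is the Dirichlet restriction to the complement of the whole slab, again not $H_j$. In addition, Theorem~\ref{thmgenW} asserts only the existence of a constant $C=C(W)$; no linear scaling in $\|W\|_\infty\cdot|\supp W|$ is proved or available from its proof, whose Schatten-norm input $(H_L+1)^{-m}-(H_L+W+1)^{-m}$ depends on $\|W\|_\infty$ in an uncontrolled way for your purposes.

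There is also a structural issue: in the paper, Theorem~\ref{thmgenW} is a by-product of the boundary-condition argument, proved "along the very same lines" as Theorem~\ref{lem:SSFEstimate}, so deriving the single-face bound from Theorem~\ref{thmgenW} inverts the logical order and leaves the actual content unproved. What is really needed, and what the paper provides, is: (a) the Birman--Schwinger reduction (Lemma~\ref{app:lem}) applied with the perturbation taken as the difference of resolvent powers $W_M=(H_{L,l,M-1}+1)^{-m}-(H_{L,l,M}+1)^{-m}$, which lies in $\S^p$ for $p<1$ with norm controlled by the commutator/switch-function estimate localizing the effect of the boundary-condition change near the face (Lemmas~\ref{lemma1} and~\ref{belwo}); and (b) the weak-$L^1$ fractional-moment bound of \cite{artRSO2006AizEtAl2} (Lemma~\ref{weakL1}), applied after a resolvent expansion and Combes--Thomas estimates to make the averaged $\S^p$-norm of $|W_M|^{1/2}(H_{L,l,M}-E-i\eps)^{-1}|W_M|^{1/2}$ finite uniformly in $\eps$, $L$, $l$ and locally uniformly in $E$. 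Your proposal gestures at the fractional-moment input but routes it through an approximation scheme that cannot reach the Dirichlet limit with uniform constants, so the gap is genuine.
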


Using the operator inequalities $H^N_{L,l} \oplus H^N_{\Lambda_{l}(x_{0})} \le H_L\le H^D_{L,l} \oplus H^D_{\Lambda_{l}(x_{0})}$ and $\xi(E,A,B)\le \xi(E,A,C)$ for $B\le C$, Theorem \ref{lem:SSFEstimate} implies

\begin{corollary}
\label{cor:SSF}
Assume {\upshape(K)}, {\upshape(V1)} and {\upshape(V2)}. Given a compact energy interval $I\subset \R$, then there exists a constant $C>0$ such that
\begin{equation}
	\label{eq:SSFEstStat2}
	\sup_{E\in I}\mathbb{E}\big[\big|\xi\big(E,H_{L},H^{\star}_{L,l} \oplus H^\star_{\Lambda_{l}(x_{0})}\big)\big|\big] \leq C l^{d-1}
	\end{equation}
	holds for all $L >1$ and $l\ge 1$ such that $x_{0} \in \Lambda_{L}$, provided $\Lambda_l(x_{0})\subset \Lambda_L$ with $\dist\big(\partial \Lambda_l(x_{0}), \partial \Lambda_L\big)\geq 1$. 
	Here, $\star$ denotes boundary conditions on the inner boundary $\partial \Lambda_l(x_{0})$ subject to $H^{N}_{L,l} \oplus H^N_{\Lambda_{l}(x_{0})}\le H^{\star}_{L,l} \oplus H^\star_{\Lambda_{l}(x_{0})}\le H^{D}_{L,l} \oplus H^D_{\Lambda_{l}(x_{0})}$ .
\e{corollary}

\begin{remark}
We restricted ourselves to $L>1$,  $l\ge 1$ and boxes with $\dist\big(\partial \Lambda_l(x_{0}), \partial \Lambda_L\big)\geq 1$ in the above theorems since we do not want to complicate things with notational subtleties arising from small length scales $l,L$ in the proof.  
\end{remark}

\begin{theorem}\label{thmgenW}
Assume {\upshape(K)}, {\upshape(V1)} and {\upshape(V2)}. Let $W\in L^\infty_{c}(\R^d)$ .
Given a compact energy interval $I\subset \R$, there exists a constant $C>0$ such that 
\beq
\sup_{E\in I} \E\big[\big| \xi(E,H_L,H_L+W)\big|\big]<C
\eeq
for all $L>1$. 
\end{theorem}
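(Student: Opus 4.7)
The plan is to reduce the statement to Corollary \ref{cor:SSF} and Theorem \ref{lem:SSFEstimate}(i) via a chain-rule decomposition along the boundary of a fixed cube that traps $\supp W$. I fix once and for all $l_0 > 1$ and $x_0 \in \R^d$, depending only on $W$, such that $\supp W \subset \Lambda_{l_0}(x_0)$. For all $L$ large enough that $\Lambda_{l_0}(x_0) \subset \Lambda_L$ with $\dist(\partial \Lambda_{l_0}(x_0), \partial \Lambda_L) \geq 1$, I will bound $\E[|\xi(E, H_L, H_L + W)|]$ by a constant $C = C(l_0, I, W)$. For the remaining bounded range of $L$, $H_L$ lives on a cube of fixed size and both $H_L$ and $H_L + W$ have at most a deterministic number of eigenvalues in any neighbourhood of $I$, so the SSF is trivially bounded.

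For the main range of $L$, set $C_\star := H^\star_{L, l_0} \oplus H^\star_{\Lambda_{l_0}(x_0)}$ for $\star \in \{D, N\}$. Since $\supp W \subset \Lambda_{l_0}(x_0)$, the perturbation respects the decoupling: $C_\star + W = H^\star_{L, l_0} \oplus (H^\star_{\Lambda_{l_0}(x_0)} + W)$. Applying \eqref{linear} together with the direct-sum identity $\xi(E, A \oplus B, A \oplus B') = \xi(E, B, B')$, I split
\[
\xi(E, H_L, H_L + W) = \xi(E, H_L, C_D) + \xi(E, H^D_{\Lambda_{l_0}(x_0)}, H^D_{\Lambda_{l_0}(x_0)} + W) + \xi(E, C_D + W, H_L + W).
\]
The first term has expectation bounded by $C_1 l_0^{d-1}$ uniformly in $E \in I$ by Corollary \ref{cor:SSF}. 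The middle term lives on the fixed cube $\Lambda_{l_0}(x_0)$: the uniform $L^\infty$-bound on $V_\omega$ together with Weyl's law implies that both $H^D_{\Lambda_{l_0}(x_0)}$ and $H^D_{\Lambda_{l_0}(x_0)} + W$ have at most some deterministic number $N_0 = N_0(l_0, I, \|W\|_\infty)$ of eigenvalues in $(-\infty, \sup I + \|W\|_\infty]$, so this term is bounded by $2 N_0$ almost surely.

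The third term is the delicate piece, as the deterministic perturbation $W$ prevents a direct appeal to Corollary \ref{cor:SSF}. I handle it via operator monotonicity. From $C_N + W \leq H_L + W \leq C_D + W$ the three corresponding eigenvalue counting functions form a monotone chain, whence
\[
|\xi(E, C_D + W, H_L + W)| \leq |\xi(E, C_N + W, C_D + W)|.
\]
Applying the direct-sum property once more yields
\[
\xi(E, C_N + W, C_D + W) = \xi(E, H^N_{L, l_0}, H^D_{L, l_0}) + \xi(E, H^N_{\Lambda_{l_0}(x_0)} + W, H^D_{\Lambda_{l_0}(x_0)} + W),
\]
in which $W$ has vanished from the first summand because it is supported inside the complementary cube. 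Theorem \ref{lem:SSFEstimate}(i) then bounds the expectation of the first summand by $C_2 l_0^{d-1}$, and the second summand is again controlled deterministically by $2 N_0$ on the fixed cube. Summing the three contributions of the original decomposition gives the required uniform-in-$L$ estimate.

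The main obstacle is precisely this third term: a quantitative Dirichlet-Neumann SSF bound is needed for the perturbed operator $H_L + W$, whereas Theorem \ref{lem:SSFEstimate} and Corollary \ref{cor:SSF} are formulated only for the unperturbed random operator $H_L$. The key observation that unlocks the argument is that $W$ appears identically on both sides of the sandwich $C_N + W \leq H_L + W \leq C_D + W$, so it cancels out of the exterior annular piece after splitting the direct sum, and only the inner fixed-cube piece continues to depend on $W$, where it is harmless.
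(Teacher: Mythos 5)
Your argument is correct, but it takes a genuinely different route from the paper. The paper proves Theorem \ref{thmgenW} by re-running the Birman--Schwinger machinery of Section \ref{sec:SSFEst} directly for the potential perturbation: one replaces the operator $W_M$ of \eqref{eqWM} by the corresponding difference of resolvent powers of $H_L$ and $H_L+W$, invokes Lemma \ref{app:lem}, the weak-$L^1$ bound of Lemma \ref{weakL1} and the Combes--Thomas/Schatten estimates, with $\dist(x,\Gamma_1)$ replaced by $\dist(x,\supp W)$ in Lemmas \ref{lemma1} and \ref{belwo}. You instead trap $\supp W$ in a fixed cube $\Lambda_{l_0}(x_0)$ and reduce the statement to the already-proven boundary-condition bounds of Theorem \ref{lem:SSFEstimate}(i) and Corollary \ref{cor:SSF} via Dirichlet--Neumann bracketing; your key observation, that $W$ appears on both sides of the sandwich $C_N+W\le H_L+W\le C_D+W$ and therefore cancels from the exterior annular piece once the direct sums are split, is sound, and the only place $W$ survives is the fixed inner cube, where a deterministic Weyl bound (uniform in $\omega$ because $0\le V_\omega\le C_{u,+}$) controls both counting functions. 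All ingredients you use --- the bracketing inequalities, monotonicity of the counting function under form inequalities, and additivity of $\xi$ over direct sums --- are available in the paper, and there is no circularity since Theorem \ref{lem:SSFEstimate} is proved independently. Your reduction is softer and shorter because it reuses the hard analysis rather than repeating it; the price is a cruder constant, of order $l_0^{d-1}$ plus a Weyl term growing like the volume $l_0^{d}$ of the trapping cube, whereas the paper's direct argument yields a constant tied more intrinsically to $W$ itself. Both are valid proofs of the stated theorem.
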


The proof of Theorem \ref{thmgenW} follows along the very same lines as Theorem \ref{lem:SSFEstimate}. We indicate it in Section \ref{proofremaining}. 

\be{remarks}
\label{ssf-problems}
\item
	Theorem \ref{lem:SSFEstimate} (ii) implies that the IDOS is independent of any boundary conditions lying in between
	Dirichlet and Neumann boundary conditions. 
	This is known for deterministic Schr\"odinger operators with magnetic fields
	\cite{Nakamura:2001jh,Doi:2001hy,Mine:2002tl}  provided the IDOS exists. The deterministic  proofs there show that the error is of order $o(|\Lambda_L|)$.
\item
	For continuum Schr\"odinger operators, there is a subtle problem to obtain
	bounds on the SSF,
	which hold pointwise in energy and are uniform in the volume. It has been noted in
	\cite{artSSF1987Kir}  that for  $d\geq 2$ and  $E>0$
	\beq
		\label{kirsch-div}
		\sup_{L>0} \xi\big(E, -\Delta^D_L,-\Delta^D_{L}+W\big) =\infty
	\eeq
	for $W\in L^\infty_c(\R^d)$ and $W\ge 0$. 
	The divergence in \eqref{kirsch-div} is due to increasing degeneracies of the eigenvalues of the Laplacian for
	larger volumes at fixed energy. Theorem \ref{thmgenW} shows that such subtleties are ruled out by the randomness. 
	and one has pointwise-in-energy bounds on the SSF uniform in the volume and locally uniform in energy. 
\item
	There are several results for $L^{p}$-boundedness of the SSF \cite{MR1824200,MR1945282,MR2200269}.
	Results analogous to Theorem \ref{lem:SSFEstimate} and \ref{thmgenW} are known in the region of complete localisation, see \cite{doi:10.1093/imrn/rnx092} and \cite[Thm.~3.1, Cor.~3.2]{DiGeMu16b}. 
	For a special class of perturbations pointwise-in-energy bounds on the averaged SSF are proved in \cite{MR2352262}. 
\item	
 A bound on a one-parameter averaged fractional spectral shift function is proved in  \cite[App. D]{artRSO2006AizEtAl2} for a perturbation by a potential. Their ideas, using the Birman-Schwinger principle and a weak-$L^1$ bound on the resolvent, are quite similar to the ideas used in our proof. 
\end{remarks}

\section{Proof of Theorem \ref{lem:SSFEstimate}}
\label{sec:SSFEst}
Let  $p>0$. Then we denote by $\S^p$ the corresponding Schatten-$p$-class with norm $\|\cdot\|_p$. For $0<p<1$ and a compact operator $A$  with singular values $(a_n)_{n\in\N}$ we note that $\|A\|_p:=\big(\sum_{n\in\N} |a_n|^p\big)^{1/p}$  is not a norm but a quasi norm. However, the
 ``triangle-like''
inequality
\beq
\label{ptriangle}
\|A+B\|_p^p\leq \|A\|_p^p+\|B\|_p^p
\eeq
still holds for compact operators $A,B$ and $p\in (0,1]$, see \cite[Thm. 2.8]{McCarthy}. 
We also note that the (generalised) H\"older inequality for Schatten classes remains true for H\"older exponents $p_1,...,p_n>0$ subject to $p_1^{-1}+...+p^{-1}_{n}=p^{-1}$.
We write $\chi_x:=\id_{\Lambda_1(x)}$ for $x\in\R^d$, where $\id_A$ stands for the indicator function on the Borel set $A\subset\R^d$.

We only prove Theorem \ref{lem:SSFEstimate} (i). The second part, Theorem \ref{lem:SSFEstimate} (ii), follows along the very same lines. 
We restrict ourselves to $l\in\N$. Let $\Lambda_l(x_0)\subset \Lambda_L$ for some $x_0\in \Lambda_L$. 
We denote by $\partial \Lambda_l$ the inner boundary of $\Lambda_L\backslash \Lambda_l(x_0)$ and we set $\Lambda_{L,l}:=\Lambda_L\backslash\overline{ \Lambda_l(x_0)}$. 

\subsection{Splitting the perturbation into $O(l^{d-1})$ smaller perturbations}
We assume $l\in\N$ and split $\partial \Lambda_l$ in a disjoint union of hypercubes of side length $1$. More precisely, there exists a constant $C_l\in \N$ such that
\beq\label{decomp}
\partial \Lambda_l = \bigcup_{n=1,...,C_l} \overline{\Gamma_n},
\eeq
for a family $\big(\Gamma_n\big)_{n=1}^{C_l}$ of  open hypercubes of side length $1$ such that $\Gamma_n\cap\Gamma_m=\emptyset$. The constant $C_l$ satisfies $C_l= O(l^{d-1})$, as $l\to\infty$. 

Next we define a family of operators lying between $H^{N}_{L,l}$ and $H^{D}_{L,l}$. For $M\in\{1,..., C_l\}$ we define 
$H_{L,l,M}$ as the operator which admits Dirichlet boundary conditions on the set $\Gamma_M:=\displaystyle\bigcup_{n=1}^M \Gamma_n$ and Neumann boundary conditions on the complement $\displaystyle\partial \Lambda_l\backslash \bigcup_{n=1}^M \overline{\Gamma_n}$. More precisely, the corresponding Laplace operator is defined by the quadratic form 
$\int_{\Lambda_{L,l}} \d x \overline{\nabla u}\,\nabla v $ with domain 
\beq
H_{0,\Gamma_M}(\Lambda_{L,l}):=\big\{u\in H^1(\Lambda_{L,l}): u\big|_{\Gamma_M}=0\big\}.
\eeq
Here $H^1(\Lambda_{L,l})$ is the first order Sobolev space on $\Lambda_{L,l}$. 
 For more information on Laplace operators with mixed boundary conditions we refer to \cite{MR3631314} and references therein. 
We use the notation 
$H_{L,l,0} := H^{N}_{L,l}$.

Now we exploit the linearity of the spectral shift function with respect to the perturbation, see \eqref{linear}, and obtain for $E\in \R$ that
\beq
\xi(E,H^{N}_{L,l},H^{D}_{L,l}) = \sum_{M=1}^{C_l} \xi(E,H_{L,l,M},H_{L,l,M-1}) .
\eeq
This implies
\beq
\big|\xi(E,H^{N}_{L,l},H^{D}_{L,l}) \big|\leq  \sum_{M=1}^{C_l}\big| \xi(E,H_{L,l,M},H_{L,l,M-1})\big|.
\eeq
Since $C_l= O(l^{d-1})$, as $l\to\infty$, the assertion follows once we have proved
\beq\label{pf:eq8}
\sup_{M\in\{1,..., C_l\}}\E\big[\big| \xi(E,H_{L,l,M},H_{L,l,M-1})\big|\big] < C
\eeq
for a constant $C>0$ which is independent of $l,L$, the particular decomposition in \eqref{decomp} and locally uniform in $E$.

\subsection{Bound on the SSF using the Birman-Schwinger principle}
We fix $0<p<1$ and $I\subset \R$ compact. Moreover, 
let $M\in\{1,..., C_l\}$ and $m\in\N$ with $m+1>d/p$ and $m$ odd. In the following we assume without loss of generality that $H_{L,l,M}\ge 0$ for all $L,l>0$. This can always be achieved by adding a constant to $H$.  We denote by $\varrho(\cdot)$ the resolvent set of an operator. Therefore, $1\in \varrho(H_{L,l,M})$ and 
by definition of the spectral shift function, we obtain
\beq
\big| \xi(E,H_{L,l,M},H_{L,l,M-1})\big| = \big| \xi((E+1)^{-m},(H_{L,l,M}+1)^{-m},(H_{L,l,M-1}+1)^{-m})\big|.
\eeq
We define 
\beq\label{eqWM}
W_M:=  (H_{L,l,M-1}+1)^{-m} -  (H_{L,l,M}+1)^{-m}.
\eeq
Then the assumption $m+1>d/p$ implies that $W_M\in \S^p$, see Lemma \ref{lemma1}. 
The Birman-Schwinger principle, see Lemma \ref{app:lem}, implies that 
\begin{align}
\big| \xi\big((E+1)^{-m}&,(H_{L,l,M}+1)^{-m},(H_{L,l,M-1}+1)^{-m}\big)\big|\notag\\
&\leq
\liminf_{\substack{ \eps\searrow 0}}\, \big\| |W_M|^{1/2} \frac 1 {(H_{L,l,M}+1)^{-m} -(E-i \eps+1)^{-m}} |W_M|^{1/2}  \big\|_p^p\notag\\
&\leq
(E+1)^{m p}\liminf_{\substack{ \eps\searrow 0}}\, \big\| |W_M|^{1/2} \frac {(H_{L,l,M}+1)^m}{h_{\wtilde E}(H_{L,l,M})} \frac 1{H_{L,l,M} -\wtilde E} |W_M|^{1/2} \big\|_p^p,\label{pf:eq1}
\end{align}
where $\wtilde E:=E-i\eps$ and 
\beq
h_E(x):=\frac{(1+x)^m - (1+E)^m}{x-E}=\sum_{k=1}^{m-1} (1+x)^k (1+E)^{m-k-1}.
\eeq
 We note that we allow the right hand side of \eqref{pf:eq1} to be $\infty$.  In particular, using Fatou's lemma we obtain for all $E\in\R$  the bound
 \begin{align}
\E\big[ \big| \xi(E,H_{L,l,M},H_{L,l,M-1})\big|\big] & \notag\\
\leq
(E+1)^{m p} \liminf_{\substack{ \eps\searrow 0}}&\ \E\big[
 \big\| |W_M|^{1/2} \frac {(H_{L,l,M}+1)^m}{h_{\wtilde E}(H_{L,l,M})} \frac 1{H_{L,l,M} -\wtilde E} |W_M|^{1/2} \big\|_p^p
\big].\label{pf:eq11}
 \end{align}
In the following we write $H_M:=H_{L,l,M}$ to shorten notation.
\subsection{Inserting higher powers of resolvents}
To bound the expectation value of \eqref{pf:eq1}, we aim at applying the following weak-$L^1$ bound for resolvents due to \cite{artRSO2006AizEtAl2}. This is the main ingredient to the proof. We formulate the lemma in a general way:
\begin{lemma}\label{weakL1}
Let $0<p<1$. Then there exists a constant $C_{p}>0$ such that for all self-adjoint operators $A$, all $\eps>0$, all $E\in\R$ and all operators  $M_1, M_2\in \S^2$ and $U_1,U_2\in\S^\infty$ with $U_1,U_2\ge 0$ the inequalities
\beq\label{eq1lemma}
\int_0^1 \d t\,\int_0^1  \d s\, \rho(t) \rho(s) \| M_1 U_1^{1/2}\frac 1 {A+ t U_1 + s U_2 -E+i \eps }U_2^{1/2}M_2\|_2^p \leq C_{p} \|M_1\|_2^p \|M_2\|_2^p
\eeq
and
\beq\label{eq2lemma}
\int_0^1\d t\,  \rho(t)  \big\|M_1 U_1^{1/2}\frac 1 {A+ t U_1 -E+i \eps }U_1^{1/2}M_2\big\|_2^p \leq C_{p} \|M_1\|_2^p \|M_2\|_2^p
\eeq
hold.
\end{lemma}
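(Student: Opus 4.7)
The strategy is to reduce the averaged $p$-Schatten norm on the left-hand side to the weak-$L^{1}$ spectral-averaging estimate for resolvents from \cite[App.~B]{artRSO2006AizEtAl2}, via the Birman--Schwinger / Krein identity. I will first establish \eqref{eq2lemma} as a single-variable spectral averaging bound, and then derive \eqref{eq1lemma} from it.

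For \eqref{eq2lemma}, set $z:=E-\i\eps$ and $K(t):=U_{1}^{1/2}(A+tU_{1}-z)^{-1}U_{1}^{1/2}$. H\"older for Schatten classes, $\|XY\|_{2}\le\|X\|_{2}\|Y\|_{\infty}$, together with the crude $\|M_{2}\|_{\infty}\le\|M_{2}\|_{2}$, yields
$$\bignorm{M_{1}K(t)M_{2}}_{2}\le\|M_{1}\|_{2}\,\|K(t)\|_{\infty}\,\|M_{2}\|_{2},$$
so \eqref{eq2lemma} reduces to a uniform-in-$\eps$ bound on $\int_{0}^{1}\rho(t)\|K(t)\|_{\infty}^{p}\,\dt$. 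The resolvent identity gives the Krein relation $K(t)=(K(0)^{-1}+t)^{-1}$; since $\Im z<0$, one computes $\Im K(0)\le 0$ and hence $\Im K(0)^{-1}\ge 0$, so that $K(t)=(S+t+\i T)^{-1}$ with self-adjoint $S$ and $T\ge 0$. This matrix-valued Herglotz structure in $t$ drives the averaging: the key input from \cite{artRSO2006AizEtAl2} is the weak-type bound
$$\bigabs{\bigbraces{t\in[0,1]:\|K(t)\|_{\infty}>\lambda}}\le C\lambda^{-1},$$
with $C$ depending only on $\|U_{1}\|_{\infty}$, uniformly in $\eps,E,A$. A layer-cake computation, the boundedness of $\rho$, and the integrability of $\lambda^{p-1}\min(1,C/\lambda)$ for $p\in(0,1)$ then produce $\int_{0}^{1}\rho(t)\|K(t)\|_{\infty}^{p}\,\dt\le C_{p}$, proving \eqref{eq2lemma}.

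For \eqref{eq1lemma}, fix $s\in[0,1]$ and treat $\widetilde A(s):=A+sU_{2}$ as a new self-adjoint operator. In the diagonal case $U_{1}=U_{2}$, applying \eqref{eq2lemma} to $\widetilde A(s)$ produces a bound whose right-hand side is independent of $s$, so Fubini together with $\int_{0}^{1}\rho(s)\,\d s=1$ gives the claim. In the genuine off-diagonal case, I perform an operator Cauchy--Schwarz splitting $M_{1}U_{1}^{1/2}RU_{2}^{1/2}M_{2}$ into two factors, each controlled by the Krein structure in one of the two averaging variables: one factor is bounded by applying \eqref{eq2lemma} in $t$ with the operator $\widetilde A(s)$ and the positive $U_{1}$, the other factor similarly in $s$ with the operator $A+tU_{1}$ and the positive $U_{2}$. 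A final Cauchy--Schwarz against the product measure $\rho(t)\rho(s)\,\d t\,\d s$ then combines the two one-variable bounds into \eqref{eq1lemma}.

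The main obstacle will be precisely this off-diagonal step: a naive operator Cauchy--Schwarz on $\|U_{1}^{1/2}RU_{2}^{1/2}\|_{\infty}^{2}$ via the identity $RR^{*}=\Im(R)/\eps$ introduces a spurious factor $\eps^{-1/2}$ and destroys the uniform-in-$\eps$ conclusion. The remedy is to exploit the fact that the resolvent is a Herglotz function in \emph{both} $t$ and $s$ separately, thanks to the positivity of both $U_{1}$ and $U_{2}$: averaging in each variable independently contributes one $\eps$-uniform weak-$L^{1}$ bound, so that the two divergences cancel in the product. This is the reason a two-parameter integration is needed in \eqref{eq1lemma}, and is essentially where the hypothesis $\rho\in L^{\infty}([0,1])$ enters quantitatively through $\|\rho\|_{\infty}^{2}$.
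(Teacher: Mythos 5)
Your layer--cake step and the overall scheme (a weak-$L^{1}$ bound in the coupling variables integrated against the bounded density $\rho$, using $0<p<1$) match the paper, which simply invokes \cite[Lemma~3.1 and Proposition~3.2]{artRSO2006AizEtAl2}. But the weak-type estimate you feed into the layer cake is not the one proved there, and it is false. After discarding $M_{1},M_{2}$ via $\|M_{2}\|\le\|M_{2}\|_{2}$, your argument for \eqref{eq2lemma} needs
\[
\bigl|\bigl\{t\in[0,1]:\ \bigl\|U_{1}^{1/2}(A+tU_{1}-E+i\eps)^{-1}U_{1}^{1/2}\bigr\|>\lambda\bigr\}\bigr|\ \le\ C\lambda^{-1}
\]
with $C$ depending only on $\|U_{1}\|$, uniformly in $A$ and $\eps$. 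Take $U_{1}=\id$ and $A$ self-adjoint with $N$ eigenvalues equally spaced in $[E-1,E]$: for small $\eps$ the operator norm of $K(t)=(A+t-E+i\eps)^{-1}$ exceeds $\lambda$ on a set of measure of order $\min(1,N/\lambda)$, which is not $\Oh(\lambda^{-1})$ uniformly in $N$. The Herglotz structure $K(t)=(S+t+iT)^{-1}$, $T\ge0$, yields a weak-$L^{1}$ bound for scalar matrix elements $\langle\varphi,K(t)\varphi\rangle$ with constant proportional to $\|\varphi\|^{2}$, and the content of \cite[Lemma~3.1]{artRSO2006AizEtAl2} is that the same holds for the Hilbert--Schmidt norm $\|M_{1}K(t)M_{2}\|_{2}$ with constant proportional to $\|M_{1}\|_{2}\|M_{2}\|_{2}$; the constant in any such matrix-valued weak-$L^{1}$ bound is a trace-type quantity. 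In the example above one has $\|M_{1}\|_{2}\|M_{2}\|_{2}\sim N$ when $M_{1},M_{2}$ are identities on the relevant subspace, which is exactly consistent with the measure $\sim N/\lambda$ and shows why the Hilbert--Schmidt weights must stay inside the norm rather than be pulled out.

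The two-parameter bound \eqref{eq1lemma} is also not established. The ``operator Cauchy--Schwarz splitting'' is never written down, and the assertion that the two $\eps^{-1/2}$ divergences ``cancel'' is exactly the point that requires proof; as you note yourself, the only factorization immediately available, $|R|=\eps^{-1/2}|\Im R|^{1/2}$, produces the divergence rather than removing it. Moreover, even granting a correct one-parameter estimate, a Cauchy--Schwarz combination of two one-variable bounds is not how the two-parameter estimate is obtained: \cite[Lemma~3.1]{artRSO2006AizEtAl2} is a genuinely two-dimensional weak-$L^{1}$ statement for $(t,s)\mapsto\|M_{1}U_{1}^{1/2}(A+tU_{1}+sU_{2}-z)^{-1}U_{2}^{1/2}M_{2}\|_{2}$ with respect to planar Lebesgue measure (with the one-parameter, diagonal version handled separately), and Proposition~3.2 there converts it into the fractional-moment bound. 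To repair your argument, replace the operator-norm weak-type claim by that Hilbert--Schmidt weak-$L^{1}$ estimate and then run your layer-cake computation with $\|\rho\|_{\infty}$, respectively $\|\rho\|_{\infty}^{2}$; that is precisely the paper's proof.
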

\begin{proof}
This lemma follows from \cite[Lemma 3.1 and Proposition 3.2]{artRSO2006AizEtAl2} and the layer-cake representation, see also \cite[App. A.3]{MR2303305}. 
\end{proof}

Later on we apply Lemma \ref{weakL1} with $A=H_{L,l,M}$ and $U_1=u_i$ and $U_2=u_j$ for $i,j\in \Z^d$. Before doing so, we need to rewrite the expression \eqref{pf:eq11} in a suitable way to fit the requirements of Lemma \ref{weakL1}. Especially, we need to raise the $\|\cdot\|_p$-norm  appearing  in \eqref{pf:eq1} for some $0<p<1$ to a Hilbert-Schmidt norm and we have to insert Hilbert-Schmidt operators $M_1,M_2$ which are independent of the "random variables" $s$ and $t$. We do both things using the resolvent equation. This is the main part of this subsection. 

Let $\tau\in\N$ even and to be determined later. We apply the resolvent equation $\tau$-times and obtain 
\begin{align}\label{resEq}
\frac 1 {H_{M} -\wtilde  E} 
=
\sum_{k=1}^\tau \Big( \frac 1{H_{M}+1}\Big)^k (\wtilde E+1) ^{k-1}+(\wtilde E+1)^\tau \Big(\frac 1{H_{M}+1}\Big)^\tau \frac 1 {H_{M}-\wtilde E}.
\end{align}
 Then the triangle-like inequality \eqref{ptriangle} and \eqref{resEq} imply that
\begin{align}
 \big\| |W_M|^{1/2}\frac {(H_{M}+1)^m}{\wtilde h_{E}(H_{M})} \frac 1{H_{M} -\wtilde E}&  |W_M|^{1/2} \big\|_p^p\notag\\
\leq&
\sum_{k=1}^\tau  |\wtilde E+1| ^{p(k-1)} \big\| |W_M|^{1/2} \frac{g_{\wtilde E}(H_{M})}{(H_{M}+1)^{k-1}}  |W_M|^{1/2} \big\|_p^p
\notag\\
&+ |\wtilde  E+1|^{\tau p} \big\| |W_M|^{1/2}\frac{g_{\wtilde E}(H_{M})}{(H_{M}+1)^{\tau}}\frac {1} {H_{M} -\wtilde  E} |W_M|^{1/2}  \big\|_p^p,
\label{pf:eq111}
\end{align}
where $g_{z}(x):=(x+1)^{m-1}/h_{ z}(x)$. 
%For $0<s<1$ we choose $p>1$ small enough such that $ r:=p s <1$. 
We note that $g_{z}$ is the rational function
\beq\label{gE}
g_{z}(x) =\frac{(x+1)^{m-1} (x-z)}{(x+1)^m-(z+1)^m}=\prod_{i=1}^{m-1} \frac{x-a_i(z)}{x-b_i(z)}
\eeq
for some $(a_i(z))_{i=1}^{m-1}\subset\C$ and $(b_i(z))_{i=1}^{m-1}\subset\C$ depending continuously on $z\in\C$.
The assumption $m\in\N$ odd implies that $x=E$ is the only real-valued solution of the equation $(x+1)^m=(E+1)^m$, i.e. $\Im b_i(E)\neq 0$ for all $i=1,...,m-1$. Moreover, $b_i(E)\neq b_j(E)$ for $i\neq j$ and $\sup_{x\in\R} |g_E(x)|<\infty$. By continuity of the $b_i(z)$ for $i=1,...,m-1$ this implies
\beq\label{z?}
\sup_{x\in\R,\eps\in(-\eps_0,\eps_0)} |g_{E+i\eps}(x)|<\infty
\eeq
 for some $\eps_0>0$. In the following we restrict ourselves to $\eps<\eps_0$ such that \eqref{z?} holds.

We estimate the expectation of the first part of \eqref{pf:eq111} with the generalized H\"older inequality $\|A B C\|_p \leq \|A\|_{2p} \|B\| \|C\|_{2p}$. This gives for $k=1,...,\tau-1$
\beq\label{pf:eq1111}
 \E \big[\big[ \big\| |W_M|^{1/2} \frac{g_{\wtilde E}(H_{M})}{(H_{M}+1)^{k-1}}  |W_M|^{1/2} \big\|_p^p
 \big] \leq \big\|\frac{g_{\wtilde E}(\,\cdot\,)}{(\,\cdot\, +1)^{k-1}}\big\|_\infty\,  \E \big[\big[\big\| W_M \big\|^p_p \big]<C_1
\eeq
for some $C_1>0$ independent of all relevant quantities where the latter is finite due to \eqref{z?} and  Lemma \ref{lemma1}. 

We note that using the covering condition {\upshape(V2)}, we can write the identity on $L^2(\R^d)$ as $\id= \sum_{n\in\Z^d} u_n v$ for some $v\in L_c^\infty(\R^d)$ with $\|v\|_\infty\leq \frac 1 {C_{u,-}}$.  Hence, using \eqref{ptriangle}, we estimate the second part of \eqref{pf:eq111} by
\begin{align}
&\||W_M|^{1/2}\frac{1}{(H_{M}+1)^{\tau/2}}\frac {1} {H_{M} -\wtilde  E}\frac{g_{\wtilde E}(H_{M})}{(H_{M}+1)^{\tau/2}} |W_M|^{1/2} \big\|_p^p
\notag\\
\leq & 
\sum_{i,j\in\Z^d} \big\| |W_M|^{1/2} \frac 1{(H_{M} +1)^{\tau/2}}u_i v\frac {1} {H_{M} -\wtilde  E} u_j v \frac {g_{\wtilde  E}(H_{M})}{(H_{M} +1)^{\tau/2}}  |W_M|^{1/2} \big\|_p^p.
\label{pf:eq2}
\end{align}
We note here that we have actually inserted an identity on $L^2(\Lambda_{L,l})$ but any identity on $L^2(\R^d)$ is trivially an identity on the smaller space as well and one should rather think of  $u_i$ as $\wtilde u_i:=u_i \id_{\Lambda_{L,l}}$ for $i\in\Z^d$.

\noindent
In the following, we write
\beq\label{xyz}
\frac 1{(H_{M} +1)^{\tau/2}} = \frac 1{(H_{M} +1)^{\tau/2-\sigma}}\frac 1{(H_{M} +1)^{\sigma}}
\eeq
where $\sigma\in\N$ is chosen such that $2\sigma>d/2$ and $\tau\in\N$ even such that $\tau/2-\sigma\ge 0$. To keep notation simple and the argument clear, we restrict ourselves for the moment to $\sigma=1$ and therefore to $d\le 3$. We set $V_{ij}:= \omega_i u_i +\omega_j u_j$. 
Then the resolvent equation implies
\beq\label{aaaaaa}
\frac 1{H_{M} +1}= \frac 1{H_{ij} +1} \Big(\id - V_{ij} \frac 1{H_{M} +1}\Big)=\frac 1{H_{ij} +1} T,
\eeq
where $H_{ij} :=  H_{M}- V_{ij}$ indicating that $H_{ij}$ is independent of the random variables $\omega_i$ and $\omega_j$ and we have set $T:=\id - V_{ij} \frac 1{H_{M} +1}$. 
Inserting \eqref{aaaaaa} in \eqref{pf:eq2}, implies 
\begin{align}
&\big\| |W_M|^{1/2} \frac 1{(H_{M} +1)^{\tau/2}}u_i v\frac {1} {H_{M} -\wtilde  E} u_j v \frac {g_{\wtilde  E}(H_{M})}{(H_{M} +1)^{\tau/2}}  |W_M|^{1/2} \big\|_p^p\notag\\
\leq 
&\big\| |W_M|^{1/2} \frac 1{(H_{M} +1)^{\tau/2-1}}T\frac 1{H_{ij} +1}v u_i \frac {1} {H_{M} -\wtilde  E}u_j v\frac 1{H_{ij} +1}   T^*\frac {g_{\wtilde  E}(H_{M})}{(H_{M} +1)^{\tau/2-1}}  |W_M|^{1/2} \big\|_p^p.\label{proof:eq}
\end{align}
Now, we insert four more identities of the form $\id =  \sum_{n\in\Z^d} \chi_n $ in the above and use \eqref{ptriangle} which gives together with \eqref{pf:eq2} that
\begin{align}
\E\big[\||W_M|^{1/2}\frac{1}{(H_{M}+1)^{\tau/2}}&\frac {1} {H_{M} -\wtilde  E}\frac{g_{\wtilde E}(H_{M})}{(H_{M}+1)^{\tau/2}} |W_M|^{1/2} \big\|_p^p\big]\notag\\
\leq &
\sum_{i,j,n,m,a,b\in\Z^d} \E\big[ \|A^\tau_{n}T_{n,m}  B_{m,i} C_{i,j} B^*_{j,b} T_{b,a}^* (D_a^\tau)^*\|_p^p\big].\label{pf:eq6} 
\end{align}
We have set in above
\begin{align}
A^\tau_n:=|W_M|^{1/2} \frac 1{(H_{M} +1)^{\tau/2-1}} \chi_n^{1/2} \quad,\quad T_{n,m}:= \chi_n^{1/2} \big(\id - V_{ij} \frac 1{H_{M} +1}\big)\chi_m^{1/2}
\end{align}
and
\begin{align}
B_{m,i}:= \chi_m^{1/2} \frac 1{H_{ij}+1} v u_i^{1/2} \quad,\quad C_{i,j}:= u_i^{1/2}\frac 1{H_{M} -\wtilde  E} u_j^{1/2}
\end{align}
and 
\beq
D^\tau_a:=|W_M|^{1/2} \frac {g_{\wtilde E}(H_{M})}{(H_{M} +1)^{\tau/2-1}}  \chi_a^{1/2}.
\eeq
We fix $r>0$, such that for $0<p<1$ and we have
%We assume from now on that $p<2$ and let $t:= 8 p/(4-2p)$. Then 
\beq
\frac 1 p = \frac 1 2 + \frac 1 r +\frac 1 r.
\eeq
We apply the generalized H\"older inequality to estimate for all $i,j,n,m,a,b\in\Z^d$
\begin{align}\label{pf:eq3}
\E\big[ \|A^\tau_{n}T_{n,m}  B_{m,i} C_{i,j} &B^*_{j,b} T_{b,a}^* (D_a^\tau)^*\|_p^p\big]\notag\\
&\leq  \E\big[ \|A^\tau_n\|_r^{p}\|T_{n,m}\|^p \|B_{m,i} C_{i,j} B^*_{j,b}\|_{2}^{p}  \|T_{b,a}\|^p\|(D_a^\tau)\|_r^{p} \big].
\end{align}
Since $\supp \rho\subset [0,1]$, assumption \eqref{uk-bounds} implies that $\|V_{ij}\|\leq C_{u,+}$. Using this and Lemma ~\ref{lem:SchattenCombesThomas}, it is straight forward to see that for $n,m\in\Z^d$
\beq\label{pf:eq5abbb}
\|T_{n,m}\| \leq C_1 e^{-c_1|n-m|}
\eeq
 for constants $c_1$ and $C_1>0$ independent of all relevant quantities. 
We choose $\tau\in\N$ also such that $r>d/(\tau/2-1)$. Since $g_{\wtilde E}$ has the form \eqref{gE}, we are in position to apply Lemma \ref{lemma2}, proved later on, to obtain for all $n\in\Z^d$ the estimate
\beq\label{pf:eq5ab}
\|A^\tau_n\|_r< C_r e^{-c_r \dist( n,\Gamma_M)} \qquad\text{and}\qquad   \|D^\tau_n\|_r < C_r e^{-c_r \dist( n,\Gamma_M)},
\eeq
where the constants $C_r >0$ and $c_r>0$ are locally uniform in energy $E$, independent of the random variables and all other relevant quantities.
Hence, \eqref{pf:eq5abbb} and \eqref{pf:eq5ab} imply 
\begin{align}
\sum_{n\in Z^d}\|A_n\|_r^{p} \|T_{n,m}\|^p 
&\leq \sum_{n\in Z^d} C_r C_1 e^{-c_r \dist( n,\Gamma_M)}  e^{-c_1|n-m|}\notag\\
&\leq C_2 e^{-c_2 \dist(m,\Gamma_M)}
\end{align}
for some constants $C_2>$ and $0<c_2<\min\{c_r,c_1\}$ independent of all relevant quantities, especially of the random variables, and locally uniform in  energy. The same bound holds for $\sum_{a\in\Z^d}  \|T_{b,a}\|^p\|(D_a^\tau)\|_r^{p} $.
Inserting the latter in \eqref{pf:eq3}, gives
\beq\label{pf:eq4}
\sum_{n,a\in Z^d}\eqref{pf:eq3} \leq C_3 e^{-c_3 \dist( m, \Gamma_M)} e^{-c_2 \dist( b,\Gamma_M)}\E\big[ \|B_{m,i} C_{i,j} B^*_{j,b}\|_2^p \big].
\eeq
for constants $c_3>0$ and $C_3>0$ locally uniform in energy $E$ and uniform in all other relevant quantities. 

\subsection{Application of Lemma \ref{weakL1}}
For $i,j\in \Z^d$, $i\neq j$ we write $H_M= H_{ij} + \omega_i \wtilde u_i+ \omega_j \wtilde u_j$, $\wtilde u_i:= u_i \id_{\Lambda_{L,l}}$ and $\E^{\perp}_{ij}$ for the expectation with respect to all random variables except $\omega_i$ and $\omega_j$. With this notation at hand the expectation in \eqref{pf:eq4} can be written as
\begin{align}\label{bbbbb}
\E\big[ \|B_{m,i}& C_{i,j} B^*_{j,b}\|_2^p \big]\notag\\
& = \E^{\perp}_{ij}\Big[ \int_0^1\int_0^1 \d t\, \d s\, \rho(t) \rho(s)\big\|B_{m,i} \wtilde u_j^{1/2}\frac 1 {H_{ij}+ t \wtilde u_j + s\wtilde  u_i -E + i\eps}\wtilde u_i^{1/2} B^*_{j,b} \big\|_2^p\Big].
\end{align}
The latter is precisely of the form needed in Lemma \ref{weakL1} where we note that $B_{m,i}$ and $B^*_{j,m}$ are independent of $t$ and $s$. Therefore, Lemma \ref{weakL1} implies that there exists a constant $C_p>0$ independent of all relevant quantities such that
\beq
\eqref{bbbbb}
 \leq C_{p}\, \E^{\perp}_{ij}\big[ \|B_{m,i}\|_2^p \|B^*_{j,b}\|_2^p\big]\label{pf:eq55}.
\eeq
The same holds for $i=j$ along the very same lines using \eqref{eq2lemma} in Lemma \ref{weakL1}. 
Since we restricted ourselves to $d\le 3$, we have $2>d/2$ and Lemma \ref{lem:SchattenCombesThomas} provides for all $n,m\in\Z^d$ the bound
\beq\label{pf:eq5}
\|B_{n,m}\|_2\leq C_4 e^{-c_4 |n-m|}
\eeq
for some constants $c_4>0$ and $C_4>0$ independent of all relevant quantities, especially independent of all random variables. Moreover, we used in the latter that $u_m\leq C_{u,+}\sum_{x\in J} \chi_x $ for some finite index set $J\subset \R^d$ only depending on the support of $u$, see {\upshape(V2)} and we also used $\|v\|_\infty\le \frac 1 {C_{u,-}}$. Finally, the bounds  \eqref{pf:eq3}, \eqref{pf:eq4}, \eqref{bbbbb} and the latter imply that there exist constants $c_5>0$ and $C_5>0$ independent of all relevant quantities and locally uniform in the energy such that
\begin{align}\label{pf:eq7}
\eqref{pf:eq6} \leq
C_5 \sum_{i,j,b,m\in\Z^d}    e^{-c_5 |b-M_0|} e^{-c_5 |m-M_0|} e^{-c_5 | b - i|}e^{-c_5 |j -m|},
\end{align}
where $M_0$ is the centre of the hypercube $\Gamma_M$. Estimating the above sum by the corresponding integral it is straightforward to see that 
\beq
\eqref{pf:eq7}\leq C_6<\infty,
\eeq
where the constant $C_6>0$ is locally uniform in the energy $E$ and independent of all other relevant quantities. This together with the bounds \eqref{pf:eq11}, \eqref{pf:eq111} and \eqref{pf:eq1111} give \eqref{pf:eq8} and the assertion follows in a neighbourhood of $E\in I $ where we recall that we assumed $d\le 3$. Since $I$ is compact the result follows. 

 For $d>3$, the proof proceeds along the very same lines. The only difference is that for $d>3$ we have $\rho>1$ in \eqref{xyz}. In this case we substitute \eqref{aaaaaa} by
\beq\label{horrible}
\frac 1{(H_{M} +1)^\rho}= \sum_{k=1}^m A_{k,ij} T_k
\eeq
valid for some $m\in\N$ and
for some operators $A_{k,ij}\in\S^\infty$, $k=1,...,m$, which are independent of the random variables $\omega_i$ and $\omega_j$ and satisfy $\chi_x A_{k,ij}\chi_y\in \S^2$ for $x,y\in\R^d$, and some bounded operators $T_k$. Moreover, there exists constants $c>0$ and $C>0$ independent of all relevant quantities such that for $x,y\in\R^d$ and $k=1,..,m$ the bounds
\beq\label{horrible2}
\|\chi_x A_{k,ij}\chi_y\|_2\leq C e^{-c|x-y|}\quad\text{and}\quad \|\chi_x T_{k}\chi_y\|\leq C e^{-c|x-y|}
\eeq
hold.
Equation $\eqref{horrible}$ follows from an iteration of the resolvent equation where the operators $A_{k,ij}$ and $T_k$ are expressions given in terms of the resolvents of $H_{ij}$ and $H_M$. Then the bounds \eqref{horrible2} follow from Lemma  \ref{lem:SchattenCombesThomas}. For $d>3$ the assertion follows from using \eqref{horrible} and \eqref{horrible2} in the subsequent steps of the proof.

\section{Proof of Theorem \ref{thmgenW} and Theorem \ref{th:LowerWegner}}\label{proofremaining}

\begin{proof}[Proof of Theorem \ref{thmgenW}]
The proof of Theorem \ref{thmgenW} follows along the same lines as the proof of Theorem \ref{lem:SSFEstimate}. In the case of a perturbation by a potential $W\in L_c^\infty(\R^d)$ we set 
\beq
W_M := (H_L+1)^m - (H_L+W +1)^m
\eeq 
instead of \eqref{eqWM} where $m\in\N$ is chosen such that $W_M\in \S^p$ for some $0<p<1$. Then we  use Lemma \ref{app:lem} and the proof follows along the very same lines as the one of Theorem~\ref{lem:SSFEstimate} where we note that Lemma \ref{lemma1} and Lemma \ref{belwo} also hold in the case of $W_M$ given above with $\dist( x,\Gamma_1)$ substituted by $\dist(x, \supp(W))$. 
\end{proof}

\begin{proof}[Proof of Theorem \ref{th:LowerWegner}]
The proof of Theorem \ref{th:LowerWegner} follows word for word from the proof of \cite[Thm. 2.3]{doi:10.1093/imrn/rnx092}. The only difference is that we use Theorem \ref{lem:SSFEstimate} instead of \cite[Thm. 2.7]{doi:10.1093/imrn/rnx092} to deduce \cite[eq. (3.32)]{doi:10.1093/imrn/rnx092}. \cite[Thm. 2.7]{doi:10.1093/imrn/rnx092} was the only reason for the limitation of the previous result to the region of complete localisation.
\end{proof}

\appendix

\section{Geometric resolvent inequality and Combes-Thomas estimate}	

In this section we consider the following deterministic Schr\"odinger operator
\begin{MyDescription}
\item[(D)] {$H:=-\Delta+V_0+ V$ with two bounded potentials $V_0, V \in L^{\infty}(\R^{d})$ such that $0\le V\le T$ for some finite constant $T>0$ and $-\Delta+V_0\ge 0$.}
\end{MyDescription}

Let $L,l>0$ with $\Lambda_l(x_0)\subset \Lambda_L$ for some $x_0\in\Lambda_L$  and $\Gamma\subset\partial \Lambda_l(x_0)$ open. We denote by $H_{L,l,\Gamma}$ the restriction of $H$ to $\Lambda_{L,l}:=\Lambda_L\setminus \overline{\Lambda_l(x_0)}$ with Dirichlet boundary conditions on $ \partial \Lambda_L$ and Dirichlet boundary conditions on $\Gamma$ and Neumann boundary condition on $\partial \Lambda_l(x_0)\setminus \overline{\Gamma}$, where we also assume that $ \partial\Lambda_l(x_0)\setminus\{\Gamma\cup \partial \Lambda_l(x_0)\setminus{\overline\Gamma}\}$ has Lebesgue measure $0$ on $\partial \Lambda_l(x_0)$, see \cite[Sec. 2]{MR3631314} for the definition of the corresponding Laplace operator. Moreover, we write $|x|$ for the Euclidean norm of a vector $x\in\R^d$. We also recall that $\chi_x:= \id_{\Lambda_1(x)}$ for $x\in\R^d$. 

\begin{lemma}\label{lemma1}
	Assume {\upshape (D)} and $p>0$. Let $m+1>d/p$ and $T>0$.
	There exists a finite constant $C>0$
	 such that for all $L,l>0$,
	$x_{0}\in\Lambda_{L}$ such that $\overline{\Lambda_{l}(x_{0})} \subset \Lambda_{L}$  with $\dist(\Lambda_{l}(x_{0}), \partial  \Lambda_{L})>1$, for all
	 $\Gamma\subset \partial \Lambda_l(x_0)$ open, all hypercubes $\Gamma_1\subset \partial \Lambda_l(x_0)$ of side length $1$ and all
	measurable potentials $V: \Rn \rightarrow [0,T]$ 
 the operator $W_{\Gamma,\Gamma_1}$ satisfies  $W_{\Gamma,\Gamma_1}\in \S^p$ with
\beq
 \|W_{\Gamma,\Gamma_1}\|_p <C<\infty,
\eeq
where $W_{\Gamma,\Gamma_1}:= (H_{L,l,\Gamma}+1)^{-m} -  (H_{L,l,\Gamma\cup\Gamma_1}+1)^{-m}$.
\end{lemma}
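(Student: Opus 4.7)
The strategy combines a telescoping identity for the $m$-th power of the resolvent with a geometric resolvent inequality for the first-order difference, and uniform Schatten-class bounds for high enough powers of the individual resolvents. Write $R := (H_{L,l,\Gamma}+1)^{-1}$ and $\tilde R := (H_{L,l,\Gamma\cup\Gamma_1}+1)^{-1}$. The identity
\[
W_{\Gamma,\Gamma_1} \;=\; R^m - \tilde R^m \;=\; \sum_{k=0}^{m-1} R^k \, (R - \tilde R) \, \tilde R^{m-1-k},
\]
together with the quasi-triangle inequality \eqref{ptriangle}, reduces the task to an $\S^p$-bound on each of the $m$ summands, uniformly in $L,l,\Gamma,\Gamma_1$.

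Each summand I would estimate by inserting partitions of unity $\id=\sum_{x\in\Z^d}\chi_x$ on both sides of the central factor $(R-\tilde R)$ and applying the generalised H\"older inequality for Schatten classes,
\[
\|R^k \chi_x (R-\tilde R)\chi_y \tilde R^{m-1-k}\|_p \;\le\; \|R^k\chi_x\|_{q_1}\,\|\chi_x(R-\tilde R)\chi_y\|\,\|\chi_y\tilde R^{m-1-k}\|_{q_2},
\]
with $1/q_1+1/q_2=1/p$. The outer factors are controlled by the standard heat-kernel/Sobolev bound $\|\chi_x (H_{L,l,\Gamma}+1)^{-n}\|_q\le C_{n,q}$ whenever $2nq>d$, which is uniform in $L,l,\Gamma$ and in $V\in[0,T]$ because of the lower bound $H_{L,l,\Gamma}\ge 0$ together with diamagnetic-type domination by the free Dirichlet Laplacian on $\Lambda_{L,l}$. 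A bookkeeping of the Schatten exponents shows that the assumption $m+1>d/p$ is exactly what is needed to split the powers so that both outer factors land in Schatten classes whose indices add up to $1/p$.

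The heart of the argument is the localization estimate on the central factor: there must exist constants $c,C>0$, independent of all the relevant quantities, such that
\[
\|\chi_x(R-\tilde R)\chi_y\| \;\le\; C\, e^{-c\,\dist(x,\Gamma_1)}\, e^{-c\,\dist(y,\Gamma_1)} \qquad (x,y\in\Z^d\cap\Lambda_{L,l}).
\]
My plan is to pick a smooth cutoff $\phi$ equal to $1$ on a thin collar of $\Gamma_1$ and vanishing outside a slightly thicker collar, so that on the complement of $\supp\phi$ the two operators $H_{L,l,\Gamma}$ and $H_{L,l,\Gamma\cup\Gamma_1}$ are \emph{identical} as Schr\"odinger operators with the same local boundary conditions. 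Rewriting $R-\tilde R$ via commutators with $\phi$ then produces expressions of the form $R\,[\text{commutator supported near }\Gamma_1]\,\tilde R$, to which the usual Combes--Thomas estimate applies on each of the two ``free'' sides to give the claimed two-sided exponential decay away from $\Gamma_1$.

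The expected main obstacle is precisely this central decay bound, since going from $H_{L,l,\Gamma}$ to $H_{L,l,\Gamma\cup\Gamma_1}$ is a change of boundary conditions on $\Gamma_1$ rather than a bounded multiplicative perturbation; one cannot simply invoke the second resolvent identity $R-\tilde R=R(B-A)\tilde R$ because the two operators do not share a common form-domain near $\Gamma_1$. The collar-cutoff/commutator rewriting circumvents this by keeping all operators involved on the same domain on each side. Once the decay is in hand, the sums over $\Z^d$ in the $x,y$ decomposition are geometric series dominated by the unit surface area of $\Gamma_1$, so each of the finitely many telescope terms is bounded in $\S^p$ uniformly in all parameters, yielding the claim.
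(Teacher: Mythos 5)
Your proposal is correct and follows essentially the same route as the paper's proof: the same telescoping identity, insertion of partitions of unity with a generalised H\"older estimate, uniform Schatten bounds on $\chi_x(H+1)^{-k}$ via the Combes--Thomas estimate, and, crucially, the same collar-cutoff/commutator identity $\chi_x(R-\tilde R)\chi_y=\chi_x R[-\Delta,\psi]\tilde R\chi_y$ to get the double exponential decay away from $\Gamma_1$ (this is the paper's Lemma~\ref{belwo}). The only cosmetic difference is that the paper puts the full Schatten index $1/p$ on a single outer resolvent factor (using $1/\infty+1/\infty+1/p=1/p$) rather than splitting it as $1/q_1+1/q_2$, and, like you, it does not aim for the optimal relation between $m$, $d$ and $p$.
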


We are not aiming at optimality in the above and the assumption $m+1>d/p$ is not optimal. 

\begin{proof}[Proof of Lemma \ref{lemma1}]
%$W_{\Gamma,\Gamma_1}\geq 0$ follows directly from \cite[Thm. 2.21,\, Chap. VI]{MR0203473}. 
We restrict ourselves to $p\leq 1$ in the proof. The proof for $p>1$ follows along the very same lines using the actual triangle inequality where \eqref{ptriangle} is used in the following. 
Let $\Gamma_2:=\Gamma\cup\Gamma_1$. We assume in the following that $\Gamma_2\setminus\Gamma\neq \emptyset$. We write
\begin{align}
W_{\Gamma,\Gamma_1}
&=\sum_{k=0}^{m-1} (H_{L,l,\Gamma}+1)^{-k} \Big(\frac  1 {H_{L,l,\Gamma}+1}- \frac 1  {H_{L,l,\Gamma_2}+1}\Big)(1+H_{L,l,\Gamma_2})^{-(m-1-k)}\notag\\
&= \sum_{k=0}^{m-1} \sum_{x,y\in\Z^d}  (H_{L,l,\Gamma}+1)^{-k} \chi_x\Big(\frac  1 {H_{L,l,\Gamma}+1}- \frac 1  {H_{L,l,\Gamma_2}+1}\Big)\chi_y(H_{L,l,\Gamma_2}+1)^{-(m-1-k)}.
\end{align}
We first show that $(H_{L,l,\Gamma}+1)^{-r} \chi_x \in \S^p$ for $k>d/(2p)$. To see this, we estimate using inequality \eqref{ptriangle}
\begin{align}\label{s1lmeq2}
\|(H_{L,l,\Gamma}+1)^{-k} \chi_x\|_p^p
& \leq \sum_{y_1,...,y_k\in\Z^d} \|\chi_{y_1}(H_{L,l,\Gamma}+1)^{-1}\chi_{y_2}... \chi_{y_k}(H_{L,l,\Gamma}+1)^{-1}\chi_x\|_p^p\notag\\
&\leq \sum_{y_1,...,y_k\in\Z^d} \prod_{i=1}^k  \|\chi_{y_i}(H_{L,l,\Gamma}+1)^{-1}\chi_{y_{i+1}}\|^p_{p k},
\end{align}
where $y_{i+1}= x$. We note that $p k>d/2$. Hence, Lemma \ref{lem:SchattenCombesThomas} implies
\beq\label{Neumann}
\eqref{s1lmeq2}\leq C_1 \sum_{y_1,...,y_k\in\Z^d}  \prod_{i=1}^k e^{-c_1 |y_i-y_{i+1}|}
\eeq
for some $C_1>0$ and $c_1>0$ independent of all relevant quantities. 
Using repeatedly
\begin{equation}
\label{eq:aPriori3}
\sum_{y_2\in\Zd} e^{-c_1|y_{1}-y_{2}|} e^{-c_1|y_{2}-y_{3}|} \leq C_2 e^{-c_1/2 |y_{1}-y_{3}|}
\end{equation}
for some $C_2>0$
in \eqref{Neumann}, we end up with 
\beq\label{eq:lemma111}
\|(H_{L,l,\Gamma}+1)^{-k} \chi_x\|_p <C_{k p}<\infty,
\eeq
where $C_{k p}$ is independent of all relevant quantities. 
Let $k\in\{0,...,m-1\}$. 
Then the H\"older inequality with $1/\infty+ 1/\infty+ 1/p =1/p$ implies
\begin{align}
&\big\| (H_{L,l,\Gamma}+1)^{-k} \chi_x\Big(\frac  1 {H_{L,l,\Gamma}+1}- \frac 1  {H_{L,l,\Gamma_2}+1}\Big)\chi_y(H_{L,l,\Gamma_2}+1)^{-(m-1-k)}\big\|_p\notag\\
\leq&
\big\|\chi_x\Big(\frac  1 {H_{L,l,\Gamma}+1}- \frac 1  {H_{L,l,\Gamma_2}+1}\Big)\chi_y\big\|
\min\big\{\|\chi_y(H_{L,l,\Gamma_2}+1)^{-(m-1-k)}\big\|_p,\|(H_{L,l,\Gamma}+1)^{-k} \chi_x\|_p \big\}.\label{s1lmeq1}
\end{align}
We assume for the moment that $k\ge m-1-k$ and therefore $k\ge (m-1)/2>d/(2p)$.
Taking together \eqref{s1lmeq1} and \eqref{eq:lemma111},  we obtain for $m+1>d/p$ and $k\ge m-1-k$
\begin{align}\label{S1bound}
&\big\| (H_{L,l,\Gamma}+1)^{-k} \chi_x\Big(\frac  1 {H_{L,l,\Gamma}+1}- \frac 1  {H_{L,l,\Gamma_2}+1}\Big)\chi_y(H_{L,l,\Gamma_2}+1)^{-(m-1-k)}\big\|_p\notag\\
\leq&
C_{kp} \big\|\chi_x\Big(\frac  1 {H_{L,l,\Gamma}+1}- \frac 1  {H_{L,l,\Gamma_2}+1}\Big)\chi_y\big\|\leq C_3 e^{-c \dist(x,\Gamma_1)} e^{-c\dist(y,\Gamma_1)}
\end{align}
for some $C_3>0$,
where we used Lemma \ref{belwo} for the last inequality.
The case $k< m-1-k$ follows along the very same lines estimating $\|\chi_y(1+H_{L,l,\Gamma_2})^{-(m-1-k)}\big\|_p$ instead. 

 Since the bound \eqref{S1bound} is summable in $x,y\in\Z^d$, we obtain together with \eqref{ptriangle} that $W_{\Gamma,\Gamma_1}\in \S^p$ with 
\beq
\|W_{\Gamma,\Gamma_1}\|_p^p\leq C_3^p \sum_{x,y\in\Z^d}  e^{-cp\dist(x,\Gamma_1)} e^{-cp\dist(y,\Gamma_1)}= C_4<\infty
\eeq
for some $C_3$ independent of all relevant quantities. 
\end{proof}

\begin{lemma}\label{belwo}
	Assume {\upshape (D)}.
	Let $T>0$. There exist
$C,c>0$ such that for all $x,y\in\mathbb{R}^d$,	 such that for all $L,l>0$,
	$x_{0}\in\Lambda_{L}$ such that $\overline{\Lambda_{l}(x_{0})} \subset \Lambda_{L}$ with $\dist(\Lambda_{l}(x_{0}), \partial  \Lambda_{L})>1$, for all
	 $\Gamma\subset \partial \Lambda_l(x_0)$ open, all hypercubes $\Gamma_1\subset \partial \Lambda_l(x_0)$ of side length $1$ and all
	measurable potentials $V: \Rn \rightarrow [0,T]$ 
\beq
\big\| \chi_x\Big( \frac 1{ H_{L,l,\Gamma}+1} - \frac 1 {H_{L,l,\Gamma\cup\Gamma_1}+1}\Big) \chi_y\big\| \leq C e^{-c\dist ( x,\Gamma_1)} e^{-c\dist(y,\Gamma_1)}. 
\eeq
\end{lemma}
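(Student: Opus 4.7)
The plan is to combine a \emph{geometric resolvent identity} that localises the difference
\beq
R_A-R_B:=(H_{L,l,\Gamma}+1)^{-1}-(H_{L,l,\Gamma\cup\Gamma_1}+1)^{-1}
\eeq
to a small neighbourhood of $\Gamma_1$, with Combes--Thomas estimates applied separately to each of the two remaining resolvents, so as to harvest exponential decay in $\dist(x,\Gamma_1)$ and in $\dist(y,\Gamma_1)$ simultaneously.

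The key observation is that, writing $H_A:=H_{L,l,\Gamma}$ and $H_B:=H_{L,l,\Gamma\cup\Gamma_1}$, the two operators coincide as differential expressions and differ only in that $H_B$ imposes an additional Dirichlet condition on $\Gamma_1$, whereas $H_A$ has a Neumann condition there. Let $\eta\in C_c^\infty(\R^d)$ be a smooth cut-off with $\eta=1$ on a small neighbourhood of $\Gamma_1$ and $\eta=0$ outside a slightly larger one, chosen so that $\mathrm{supp}\,\eta$ is disjoint from $\Gamma\cup\partial\Lambda_L$ and the normal component of $\nabla\eta$ vanishes on the Neumann piece of $\partial\Lambda_l(x_0)\setminus\overline{\Gamma\cup\Gamma_1}$. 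Then for $f\in\mathrm{Dom}(H_A)$ the product $(1-\eta)f$ lies in $\mathrm{Dom}(H_B)$: the extra Dirichlet condition on $\Gamma_1$ holds trivially since $1-\eta=0$ there, and all other boundary conditions, which are common to $H_A$ and $H_B$, are preserved. Computing $(H_B+1)[(1-\eta)R_A g]$ using $[H_B,\eta]=[-\Delta,\eta]=-(\Delta\eta)-2(\nabla\eta)\cdot\nabla$ yields
\beq
(R_A-R_B)g=\eta R_A g-R_B\eta g-R_B[H_B,\eta]R_A g.
\eeq
The case where $\min\{\dist(x,\Gamma_1),\dist(y,\Gamma_1)\}$ is bounded by an absolute constant is handled by the trivial bound $\|R_A\|,\|R_B\|\le 1$; otherwise one may shrink the support of $\eta$ so that $\eta\chi_x=0=\eta\chi_y$, and sandwiching by $\chi_x,\chi_y$ produces the key identity
\beq\label{geom-id}
\chi_x(R_A-R_B)\chi_y=-\chi_x R_B[H_B,\eta]R_A\chi_y.
\eeq

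Next I would exploit that $[H_B,\eta]$ is supported in the thin shell $\mathcal{S}:=\mathrm{supp}\,\nabla\eta$ around $\Gamma_1$, which, since $\Gamma_1$ has side length $1$, meets at most $N_d$ points of $\Z^d$ for some constant $N_d$ depending only on $d$. Inserting $\id=\sum_{z\in\Z^d}\chi_z$ on either side of the commutator and discarding terms with $\chi_z\mathcal{S}=\emptyset$ leaves $O(1)$ summands of two kinds. The multiplication contribution from $-(\Delta\eta)$ is bounded directly by
\beq
\|\chi_x R_B\chi_z\|\,\|\Delta\eta\|_\infty\,\|\chi_z R_A\chi_y\|\le C e^{-c|x-z|}e^{-c|z-y|}
\eeq
using the standard Combes--Thomas bound for $H_A$ and $H_B$ (Lemma~\ref{lem:SchattenCombesThomas}). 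For the gradient contribution from $-2(\nabla\eta)\cdot\nabla$ I would use a weighted Combes--Thomas argument: for $|\alpha|$ small the conjugated operator $e^{\alpha\cdot x}H_A e^{-\alpha\cdot x}=H_A+K_\alpha$ has a uniformly bounded resolvent, and the identity
\beq
e^{\alpha\cdot x}\nabla R_A e^{-\alpha\cdot x}=(\nabla-\alpha)(H_A+K_\alpha+1)^{-1}
\eeq
combined with the boundedness of $\nabla(H_A+1)^{-1/2}$ (which follows from $H_0\ge 0$) gives $\|\chi_z\nabla R_A\chi_y\|\le C e^{-c|z-y|}$. Summing the $O(1)$ terms and using $|x-z|\ge\dist(x,\Gamma_1)-O(1)$, $|z-y|\ge\dist(y,\Gamma_1)-O(1)$ for $z\in\mathcal{S}$ produces the asserted double exponential decay after adjusting constants.

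The main technical obstacle is the gradient piece $(\nabla\eta)\cdot\nabla R_A$ in the commutator, which requires a gradient version of Combes--Thomas rather than just the standard one; this is routine but takes some care. A secondary subtlety is the precise construction of the cut-off $\eta$ near the edges of $\Gamma_1$ so that $(1-\eta)\,\mathrm{Dom}(H_A)\subset\mathrm{Dom}(H_B)$ in the mixed Dirichlet--Neumann setting; this is routine at the level of quadratic forms (see \cite{MR3631314}) and lifts to operator domains via elliptic regularity, since $\eta$ is smooth and compactly supported away from the other boundary pieces.
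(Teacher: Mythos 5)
Your argument is essentially the paper's proof: after disposing of the near-field case by the trivial resolvent bound, both use a geometric resolvent (commutator) identity with a cut-off localised at $\Gamma_1$ (the paper's switch function $\psi$ is just your $1-\eta$), reducing $\chi_x(R_A-R_B)\chi_y$ to a term of the form $\chi_x R\,[-\Delta,\psi]\,R'\chi_y$ supported in an $O(1)$ neighbourhood of $\Gamma_1$, and then apply the Combes--Thomas estimate of Lemma~\ref{lem:SchattenCombesThomas} to the two resolvent factors to harvest the decay in $\dist(x,\Gamma_1)$ and $\dist(y,\Gamma_1)$. The only cosmetic difference is the treatment of the first-order part of the commutator: you propose a weighted (gradient) Combes--Thomas bound, whereas the paper absorbs $[-\Delta,\psi]$ into the neighbouring resolvent via the interior a priori estimate of \cite[Lemma 2.5.3]{MR1935594}; both devices are standard and interchangeable here.
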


\begin{proof}
We use the notation $\Gamma_2:=\Gamma\cup\Gamma_1$ and assume that $\Gamma_2\setminus\Gamma\neq \emptyset$. 
Let $x\in\R^d$ such that $\dist(x,\Gamma_1)\leq 4$. Then by Lemma \ref{lem:SchattenCombesThomas}
\begin{align}
\big\| \chi_x\Big( \frac 1{H_{L,l,\Gamma}+1} - \frac 1 {H_{L,l,\Gamma_2}+1}\Big) \chi_y\big\|
 \leq &
 \| \chi_x \frac 1 {H_{L,l,\Gamma_2}+1} \chi_y\|+  \| \chi_x \frac 1 {H_{L,l,\Gamma}+1} \chi_y\|\notag\\
\leq &
C_1 e^{-c_1 |x-y|} \leq C_1 e^{-c_2\dist(\Gamma_1,y)}
\end{align}
for some appropriate constants $c_1>c_2>0$ and $C_1>0$ independent of all relevant quantities. 
The same also holds for $y\in\R^d$ such that $\dist(y,\Gamma_1)\leq 4$. 

Hence, we consider in the following $x,y\in\R^d$ such that $\dist(x,\Gamma_1), \dist(y,\Gamma_1)\ge 4$. 
 We introduce a switch function $\psi \in C^{2}(\Lambda_{L,l})$ with $\dist\big(\supp(\psi),\Gamma_1\big)\geq 1/4$,
\begin{equation}
 	\supp(\nabla \psi)\subseteq \Big\{z\in \Lambda_{L,l}: 1/4 \leq \dist\big(z,\Gamma_1\big)\leq 1/2\Big\}
	=:\Omega,
\end{equation}
$ \|\nabla \psi\|_{\infty} \leq 8$ and $1 \geq \psi\geq \id_{\Lambda_{L,l}\setminus \partial \Gamma_{1,-}}$, where $ \partial \Gamma_{1,-}:=\{x\in \Lambda_{L,l}:\ \dist(x, \partial \Gamma_{1})<2\}$. We have
\begin{align}\label{pf:lm2:eq2}
\big\| \chi_x\Big( \frac 1{ H_{L,l,\Gamma}+1} - \frac 1 {H_{L,l,\Gamma_2}+1}\Big) \chi_y\big\|
&=
\big\|\chi_x \big(\frac 1{ H_{L,l,\Gamma}+1} \big[-\Delta,\psi\big]  \frac 1 {H_{L,l,\Gamma_2}+1} \big)\chi_y\big\|
\end{align}
for all $x,y\in \partial  \Gamma_{1,-} ^c$.
Here, the operator $\psi H_{L,l,\Gamma} - H_{L,l,\Gamma_2} \psi = -[-\Delta,\psi]$
is a differential operator of order one acting only on $\supp(\nabla \psi)$. % and, hence, insensitive to the boundary condition on $\Gamma_1$ and the boundary condition on $\partial \Lambda_L$ due to the assmption $\dist(\Lambda_l(x_0),\partial\Lambda_L)>1$.
Next we cover $\supp(\nabla \psi)$ by the cube $\Lambda_2(k)$ where $k$ is the center of the hypercube $\Gamma_1$ and estimate
\begin{align}\label{pf:lm2:eq42}
\eqref{pf:lm2:eq2}
&\leq  \big\|\chi_x \frac 1{H_{L,l,\Gamma}+1} \id_{\Lambda_2(k)}\big[-\Delta,\psi\big]  \frac 1 {H_{L,l,\Gamma_2}+1} \chi_y\big\|\notag\\
&\leq  \big\|\chi_x \frac 1{ H_{L,l,\Gamma}+1}\id_{\Lambda_2(k)} \big\|\big\|\id_{\Lambda_2(k)}\big[-\Delta,\psi\big]  \frac 1 {H_{L,l,\Gamma_2}+1} \chi_y\big\|.
\end{align}
We note that $\dist(\Lambda_1(y),\Lambda_2(k))\ge 1$ for all $y\in \R^d$ with $\dist(y,\Gamma_1)\ge 4$. Standard arguments provided for example in
     \cite[Lemma 2.5.3]{MR1935594} and the proof of \cite[Lemma 2.5.2]{MR1935594}  imply
\beq\label{pf:lm2:eq33}
\big\|\id_{\Lambda_2(k)}\big[-\Delta,\psi\big]  \frac 1 {H_{L,l,\Gamma_2}+1} \chi_y\big\|\leq C_2  \big\|\1_{\Lambda_3(k)} \frac 1 {H_{L,l,\Gamma_2}+1} \chi_y\big\|,
\eeq
where $C_2>0$ is independent of the precise boundary condition on $\Gamma_1$ and all other relevant quantities. Lemma \ref{lem:SchattenCombesThomas} below implies that
\beq\label{pf:lm2:eq333}
\eqref{pf:lm2:eq33} \leq C_3 e^{-c_3|k-y|}. 
\eeq
for some constants $c_3>0$ and $C_3>0$ independent of all relevant quantities. For the second term in \eqref{pf:lm2:eq42}  Lemma \ref{lem:SchattenCombesThomas} yields
 \beq\label{pf:lm2:eq43}
 \displaystyle\big\|\chi_x\frac 1{ H_{L,l,\Gamma}+1} \id_{\Lambda_2(k)} \big\|\leq C_3 e^{-c_3|x-k|}
 \eeq
  with some appropriate constants $c_3>0$ and $C_3>0$ uniform in all relevant quantities. Putting it all together,  \eqref{pf:lm2:eq42},  \eqref{pf:lm2:eq333} and \eqref{pf:lm2:eq43} imply that 
\beq\label{pf:lm2:eq7}
\eqref{pf:lm2:eq2}\leq  C_4  e^{-c_4|x-k|}e^{-c_4|k-y|} \leq C_5 e^{-c_4 \dist(x,\Gamma_1)} e^{-c_4 \dist(y,\Gamma_1)} 
\eeq
for some appropriate constants $c_4>0$ and $C_4, C_5>0$ which are uniform in all relevant quantities. 
\end{proof}

\begin{lemma}
	\label{lem:SchattenCombesThomas}
	Assume {\upshape (D)}.
	Let $p>d/2$, $z \in \C\backslash [0,\infty)$ and $T>0$. Then, there exist finite constants
	$C_{p,z},c_{p,z}>0$ such that for all $x,y\in\mathbb{R}^d$, such that for all $L,l>0$,
	$x_{0}\in\Lambda_{L}$ such that $\overline{\Lambda_{l}(x_{0})} \subset \Lambda_{L}$, for all
	 $\Gamma\subset \partial \Lambda_l(x_0)$ open and all 
	measurable potentials $V: \Rn \rightarrow [0,T]$ we have the estimate
	\begin{equation}
	\label{eq:SchattenCTStat}
	\|\chi_x (H_{L,l,\Gamma}-z)^{-1}\chi_y \|_p \leq C_{p,z} \, e^{-c_{p,z}|x-y|},
	\end{equation}	
	where the constants $c_{p,z}>0$ and $C_{p,z}>0$ depend continuously on the distance $\dist(z, [0,\infty))$. 
\end{lemma}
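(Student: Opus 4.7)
The bound is a Schatten-class refinement of the usual Combes--Thomas estimate. The plan is to combine two ingredients via the resolvent identity: an operator-norm Combes--Thomas bound that is uniform in the boundary data, and a uniform Schatten-$p$ bound for a sufficiently high power of $(H_{L,l,\Gamma}+1)^{-1}$ localised by $\chi_x$. The first step is the operator-norm analogue
\[
\|\chi_x(H_{L,l,\Gamma}-z)^{-1}\chi_y\|\le C_z\, e^{-c_z|x-y|},
\]
with $C_z,c_z>0$ depending continuously on $\dist(z,[0,\infty))$ alone and independent of the geometric data. I would prove it by the standard conjugation argument with $e^{\alpha e\cdot x}$, $|e|=1$ and small $\alpha>0$; the key observation is that multiplication by the smooth positive function $e^{\pm\alpha e\cdot x}$ is a bounded bijection of the form domain of $H_{L,l,\Gamma}$ onto itself, preserving both the Dirichlet condition on $\partial\Lambda_L\cup\overline{\Gamma}$ and the absence of a condition on $\partial\Lambda_l(x_0)\setminus\overline{\Gamma}$. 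Since $H_{L,l,\Gamma}\ge 0$ by (D), one has $\dist(z,\spec(H_{L,l,\Gamma}))\ge\dist(z,[0,\infty))$, and optimising in $\alpha$ yields the claimed dependence on $z$.

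The second step is a uniform Schatten bound at $z=-1$: for $m\in\N$ with $m>d/(2p)$,
\[
\|\chi_x(H_{L,l,\Gamma}+1)^{-m}\|_p\le C_m \qquad\text{uniformly in } L,l,\Gamma,V,x.
\]
I would derive this from the Feynman--Kac representation $(H_{L,l,\Gamma}+1)^{-m}=\tfrac{1}{(m-1)!}\int_0^\infty t^{m-1}e^{-t}e^{-tH_{L,l,\Gamma}}\,dt$ combined with the Gaussian kernel bound $\|\chi_x e^{-tH_{L,l,\Gamma}}\|_p\le C\,t^{-d/(2p)}$. The latter is obtained by dominating $H_{L,l,\Gamma}$ from below in the form sense by the pure-Neumann operator $H^N_{L,l}$ (replacing Dirichlet pieces of the boundary by Neumann reduces the quadratic form), and then bounding the pure-Neumann heat kernel by the free one on $\R^d$ via reflection across the outer and inner boundaries. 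Integrating against $t^{m-1}e^{-t}$ converges since $m>d/(2p)$.

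Finally, for general $z\in\C\setminus[0,\infty)$ I would apply the iterated resolvent identity
\[
(H_{L,l,\Gamma}-z)^{-1}=\sum_{j=1}^{m}(1+z)^{j-1}(H_{L,l,\Gamma}+1)^{-j}+(1+z)^m(H_{L,l,\Gamma}-z)^{-1}(H_{L,l,\Gamma}+1)^{-m}.
\]
The sum terms are handled by the second step combined with an operator-norm Combes--Thomas at $z=-1$ (inserted between factors to supply decay in $|x-y|$). For the remainder, I would insert a partition $\id=\sum_{n\in\Z^d}\chi_n v$ with $v=(\sum_k\chi_k)^{-1}\in L^\infty(\R^d)$ and apply H\"older's Schatten inequality
\[
\|\chi_x(H_{L,l,\Gamma}-z)^{-1}(H_{L,l,\Gamma}+1)^{-m}\chi_y\|_p\le \sum_n\|\chi_x(H_{L,l,\Gamma}-z)^{-1}\chi_n\|\cdot\|v\chi_n(H_{L,l,\Gamma}+1)^{-m}\chi_y\|_p,
\]
bounding the first factor by step one and the second by step two (plus a further application of step one to extract decay in $|n-y|$), and closing the estimate by the convolution bound $\sum_n e^{-c|x-n|}e^{-c|n-y|}\le C e^{-c'|x-y|}$. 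The main obstacle is the uniformity in the rough boundary data $\Gamma$ in step two: for arbitrary mixed Dirichlet--Neumann splittings of $\partial\Lambda_l(x_0)$ direct heat-kernel estimates are delicate, and the cleanest route is the form-monotonicity reduction to pure Neumann on $\Lambda_{L,l}$, where reflection and the free heat kernel furnish the required Gaussian bound.
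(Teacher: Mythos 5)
Your proposal reconstructs, essentially correctly, the content that the paper simply outsources: its entire proof of this lemma is a citation of \cite[Lemma A.2]{doi:10.1093/imrn/rnx092} together with references for the operator-norm Combes--Thomas estimate under mixed boundary conditions, and the cited argument is of exactly the type you describe (operator-norm Combes--Thomas via exponential conjugation, an a priori Schatten bound on localised high powers of the resolvent, and an iterated resolvent identity with a partition of unity and Schatten--H\"older/interpolation to merge decay with Schatten membership). Your resolvent identity is correct, and since $p>d/2$ forces $d/(2p)<1$, every term $(H_{L,l,\Gamma}+1)^{-j}$ with $j\ge 1$ is already admissible for the Schatten bound, so the combination in your third step closes.

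Two points deserve care. First, the assertion ``$H_{L,l,\Gamma}\ge 0$ by (D)'' is not automatic: only Dirichlet restrictions are form-monotone upward, and a Neumann (or mixed) restriction of $-\Delta+V_0+V$ can have spectrum below $0$ when $V_0$ takes negative values, even though $-\Delta+V_0\ge 0$ on $\R^d$. One only gets the uniform lower bound $H_{L,l,\Gamma}\ge -\tnorm[\infty]{V_0}$, so the conjugation argument should be run with $\dist(z,\spec(H_{L,l,\Gamma}))$ controlled after a fixed shift; the paper implicitly normalises in the same way elsewhere (``assume without loss of generality that $H_{L,l,M}\ge 0$''), so this is a shared wrinkle rather than an error, but you should not derive it from (D) alone. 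Second, in your heat-kernel step the form inequality $H_{L,l,\Gamma}\ge H^{N}_{L,l}$ does \emph{not} by itself imply pointwise domination of the heat kernels; you need the domination-of-semigroups criterion (the smaller form domain being an ideal of the larger one), which does hold for Dirichlet-versus-Neumann boundary pieces but is the actual mechanism. Moreover the Gaussian (or on-diagonal) upper bound for the pure-Neumann heat kernel on $\Lambda_L\setminus\overline{\Lambda_l(x_0)}$ is only uniform given a lower bound on $\dist(\partial\Lambda_l(x_0),\partial\Lambda_L)$ (thin necks degrade the Nash/extension constants); the lemma as stated omits this separation hypothesis, although it is imposed everywhere the lemma is used. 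With these two repairs your argument is a legitimate self-contained substitute for the paper's citation.
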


\begin{proof}
The proof follows from \cite[Lemma A.2]{doi:10.1093/imrn/rnx092} where we note that the operator norm Combes-Thomas estimate used in the proof there extends to complex energies and mixed boundary conditions as well, see in particular \cite[Thm.\ 2.4.1 and Rmk.\ 2.4.3]{MR1935594}, \cite[Thm.~1]{MR1937430}
and \cite{MR3255146}. 
\end{proof}
To state the next lemma let 
\beq\label{defg}
g(x):=\prod_{i=1}^n \frac{x - a_i}{x - b_i},\quad x\in\R,
\eeq
for some $n\in\N_0$
be a rational function with $(a_i)_{i=1}^n\subset\C$ and $(b_i)_{i=1}^n\subset \C$ such that $b_i\neq b_j$ for $i\neq j$ and $\Im b_i\neq 0$. In the case $n=0$ we set $g\equiv 1$. We define for $x\in\R^d$, $\Gamma,\Gamma_1\subset\partial\Lambda_l$ open with $\Gamma_1$ being a hypercube of side length $1$
\beq
 D^{g,\tau}_{x,\Gamma,\Gamma_1}:=|W_{\Gamma,\Gamma_1}|^{1/2} \frac {g(H_{L,l,\Gamma_2})}{(H_{L,l,\Gamma_2} +1)^\tau}  \chi_x,
\eeq
where we have set $\Gamma_2:=\Gamma\cup\Gamma_1$ and $\tau\in\N$. 

\begin{lemma}\label{lemma2}
	Assume {\upshape (D)} and $p>0$.
	Let $\tau\in\N$ and $p>d/\tau$ and $g$ be of the form \eqref{defg}. Then, there exist finite constants
	$C_{p}, c_{p}>0$ such that for all $x\in\mathbb{R}^d$, for all $L,l>0$,
	$x_{0}\in\Lambda_{L}$ such that $\overline{\Lambda_{l}(x_{0})} \subset \Lambda_{L}$  with $\dist(\Lambda_{l}(x_{0}), \partial  \Lambda_{L})>1$, for all
	 $\Gamma\subset \partial \Lambda_l(x_0)$ open and all hypercubes $\Gamma_1\subset \partial \Lambda_l(x_0)$ of side length $1$ and all
	measurable potentials $V: \Rn \rightarrow [0,T]$
\beq\label{pf:eq5a}
 \| D^{g,\tau}_{x,\Gamma,\Gamma_1}\|_p < C_p e^{-c_p \dist( x,\Gamma_1)},
\eeq
where the constants $c_p$ and $C_p$ depend continuously on $\dist(b_i,\C\backslash [0,\infty))$ for $i=1,...,n$. 
\end{lemma}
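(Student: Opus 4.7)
My plan is to combine the localization of $W_{\Gamma,\Gamma_1}^{1/2}$ near $\Gamma_1$ with Combes--Thomas decay for the remaining factors. The starting point is that $W_{\Gamma,\Gamma_1}\geq 0$: the extra Dirichlet condition on $\Gamma_1$ gives $H_{L,l,\Gamma}\leq H_{L,l,\Gamma_2}$ as quadratic forms, and $t\mapsto(t+1)^{-m}$ is operator-monotone decreasing on $[0,\infty)$, so $|W_{\Gamma,\Gamma_1}|^{1/2}=W_{\Gamma,\Gamma_1}^{1/2}$. Expanding $W_{\Gamma,\Gamma_1}$ into a telescoping sum as in the proof of Lemma~\ref{lemma1} and combining Lemma~\ref{belwo} with the Combes--Thomas decay of the $m-1$ remaining resolvent factors yields $\|\chi_y W_{\Gamma,\Gamma_1}\chi_z\|\leq Ce^{-c\dist(y,\Gamma_1)}e^{-c\dist(z,\Gamma_1)}$ for all $y,z\in\Z^d$, which immediately gives the crucial one-sided bound
$$\|W_{\Gamma,\Gamma_1}^{1/2}\chi_y\|^2=\|\chi_y W_{\Gamma,\Gamma_1}\chi_y\|\leq Ce^{-2c\dist(y,\Gamma_1)}.$$

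Next I would establish that $\|\chi_y g(H_{L,l,\Gamma_2})(H_{L,l,\Gamma_2}+1)^{-\tau}\chi_x\|_p\leq Ce^{-c|y-x|}$. Since the poles $b_i$ are distinct and have non-zero imaginary part, and both numerator and denominator of $g$ are monic of the same degree, the partial-fraction identity $g(x)=1+\sum_{i=1}^n c_i\,(x-b_i)^{-1}$ combined with the operator-norm Combes--Thomas estimate for each $(H_{L,l,\Gamma_2}-b_i)^{-1}$ produces $\|\chi_y g(H_{L,l,\Gamma_2})\chi_z\|\leq Ce^{-c|y-z|}$, with constants depending continuously on $\dist(b_i,\C\setminus[0,\infty))$. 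Writing $(H_{L,l,\Gamma_2}+1)^{-\tau}=\prod_{j=1}^\tau(H_{L,l,\Gamma_2}+1)^{-1}$, inserting a partition of unity between consecutive factors, and applying Schatten H\"older with each exponent equal to $p\tau>d$ (possible since $p>d/\tau$) together with Lemma~\ref{lem:SchattenCombesThomas} yields $\|\chi_z(H_{L,l,\Gamma_2}+1)^{-\tau}\chi_x\|_p\leq Ce^{-c|z-x|}$, the intermediate sums over cube centres being collapsed by iterating \eqref{eq:aPriori3}. A further partition-of-unity step followed by a H\"older bound then combines these two estimates into the advertised inequality.

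To conclude I insert $\id=\sum_{y\in\Z^d}\chi_y$ between $W_{\Gamma,\Gamma_1}^{1/2}$ and $g(H_{L,l,\Gamma_2})(H_{L,l,\Gamma_2}+1)^{-\tau}\chi_x$ and invoke the (quasi-)triangle inequality \eqref{ptriangle} to obtain
$$\|D^{g,\tau}_{x,\Gamma,\Gamma_1}\|_p^p\leq\sum_{y\in\Z^d}\|W_{\Gamma,\Gamma_1}^{1/2}\chi_y\|^p\,\|\chi_y g(H_{L,l,\Gamma_2})(H_{L,l,\Gamma_2}+1)^{-\tau}\chi_x\|_p^p\leq C\sum_{y\in\Z^d}e^{-cp\dist(y,\Gamma_1)}e^{-cp|y-x|}.$$
The triangle inequality $\dist(y,\Gamma_1)+|y-x|\geq\dist(x,\Gamma_1)$ extracts a factor $e^{-cp\dist(x,\Gamma_1)/2}$ from each summand and leaves behind the uniformly summable tail $\sum_y e^{-cp|y-x|/2}$, giving the claimed estimate \eqref{pf:eq5a}. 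The main technical nuisance will be keeping all constants uniform in $L,l,\Gamma,\Gamma_1$ and the potential $V$ while depending only continuously on $\dist(b_i,\C\setminus[0,\infty))$; this is exactly the type of uniformity furnished by the Combes--Thomas bounds in \cite{MR1935594,MR1937430,MR3255146} invoked through Lemma~\ref{lem:SchattenCombesThomas}, so the argument reduces to the bookkeeping sketched above.
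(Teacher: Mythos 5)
Your overall strategy -- partial fractions for $g$, partitions of unity, the Schatten--H\"older inequality with exponents $p\tau>d$ so that Lemma~\ref{lem:SchattenCombesThomas} applies to each resolvent factor, Lemma~\ref{belwo} to localise $W_{\Gamma,\Gamma_1}$ near $\Gamma_1$, and a final geometric summation -- is essentially the proof in the paper. There is, however, one genuine error in your first step: you claim $W_{\Gamma,\Gamma_1}\ge 0$ because ``$t\mapsto(t+1)^{-m}$ is operator-monotone decreasing on $[0,\infty)$''. This is false for integer $m\ge 2$ (the functions $t^{-\alpha}$ are operator antitone only for $0\le\alpha\le 1$), and in the application $m$ is a possibly large odd integer with $m+1>d/p$. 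While $H_{L,l,\Gamma}\le H_{L,l,\Gamma_2}$ does give positivity of the difference of \emph{single} resolvents, there is no reason for the difference of their $m$-th powers to be a positive operator. Consequently the identity $\|W_{\Gamma,\Gamma_1}^{1/2}\chi_y\|^2=\|\chi_y W_{\Gamma,\Gamma_1}\chi_y\|$ on which you base the localisation of $|W_{\Gamma,\Gamma_1}|^{1/2}$ is unjustified: what one actually controls via the telescoping expansion and Lemma~\ref{belwo} are the blocks $\chi_y W_{\Gamma,\Gamma_1}\chi_z$, whereas the quantity you need involves $\chi_y|W_{\Gamma,\Gamma_1}|\chi_y$, and $|W|$ is a nonlocal function of $W$.

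The gap is repairable, and the paper's proof shows how: one writes $\||W_{\Gamma,\Gamma_1}|^{1/2}\chi_y\|^{4}=\|\chi_y|W_{\Gamma,\Gamma_1}|\chi_y|W_{\Gamma,\Gamma_1}|\chi_y\|$ and uses the operator inequality $|W|\chi_y|W|\le|W|^2=W^2$, so that only $\chi_y W_{\Gamma,\Gamma_1}^2\chi_y$ needs to be estimated; inserting one more partition of unity reduces this to products $\|\chi_y W_{\Gamma,\Gamma_1}\chi_j\|\,\|\chi_j W_{\Gamma,\Gamma_1}\chi_y\|$, which your telescoping argument does control (this is the passage \eqref{abc2}--\eqref{abc4} in the paper, carried out there in Schatten norms because the paper distributes the H\"older exponents over $\tau+1$ factors including $|W_{\Gamma,\Gamma_1}|^{1/2}\chi_{k_1}$, rather than keeping that factor in operator norm as you do). With this substitution your argument closes; the remaining bookkeeping, including the continuity of the constants in $\dist(b_i,\C\setminus[0,\infty))$ through the Combes--Thomas estimates, is as you describe.
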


\begin{proof}
We restrict ourselves to $p\leq 1$ in the proof. The proof for $p>1$ follows along the very same lines using the actual triangle inequality where \eqref{ptriangle} is used in the following. We write  $D_{x}:=D^{g,\tau}_{x,\Gamma,\Gamma_1}$.
We first focus on the function $g$. 
Since $b_i\neq b_j$, a partial fraction decomposition allows for writing 
\beq\label{lm2:eq111}
g(x)=\prod_{i=1}^{n} \frac{x-a_i}{x-b_i} = 1 - \sum_{i=1}^{n} \frac{d_i}{x-b_i}
\eeq
for some $(d_i)_{i=1}^{n}\subset \C$. 
This and \eqref{ptriangle}  imply that
\begin{align}\label{pf:lm2:main}
\|D_x\|_p^p\leq 
&
 \big\||W_{\Gamma,\Gamma_1}|^{1/2} \frac 1 {(H_{L,l,\Gamma_2} +1)^{\tau}} \chi_x\big\|_p^p\notag\\
&+ \sup_{i=1,...n}|d_i|
 \sum_{i=1}^{n}\big\| |W_{\Gamma,\Gamma_1}|^{1/2}\frac 1 {(H_{L,l,\Gamma_2} +1)^{\tau}}\frac 1 {H_{L,l,\Gamma_2}-b_i} \chi_x\big\|_p^p.
\end{align}
We treat the term $ \big\||W_{\Gamma,\Gamma_1}|^{1/2} \frac 1 {(H_{L,l,\Gamma_2} +1)^{\tau}} \chi_x\big\|_p$ frist. We insert identities $\id =\sum_{k\in\Z^d} \chi_k$ and obtain using the triangle inequality and subsequently the H\"older inequality that 
\begin{align}\label{abc5}
\big\||W_{\Gamma,\Gamma_1}|^{1/2} &\frac 1 {(H_{L,l,\Gamma_2} +1)^{\tau}} \chi_x\big\|^p_p\notag\\
\leq &
\sum_{k_1,...,k_{\tau}\in\Z^d} \big\||W_{\Gamma,\Gamma_1}|^{1/2}\chi_{k_1}\big\|^p_{p (\tau+1)} \prod_{i=1}^{\tau} 
\|\chi_{k_{i}} \frac 1 {H_{L,l,\Gamma_2} +1} \chi_{k_{i+1}}\|^p_{p (\tau+1)},
\end{align}
where $k_{\tau+1} = x$. 
Now, for $k\in\Z^d$ we have
\begin{align}\label{abc2}
\big\||W_{\Gamma,\Gamma_1}|^{1/2}\chi_{k}\|_{p \tau} = \big\|\chi_{k} |W_{\Gamma,\Gamma_1}|\chi_{k} |W_{\Gamma,\Gamma_1}|\chi_{k}\big\|_{p\tau/4}^{1/4}
\leq
 \big\|\chi_{k} W_{\Gamma,\Gamma_1}^2\chi_{k}\big\|_{p\tau/4}^{1/4},
\end{align}
where we used the operator inequality $|W_{\Gamma,\Gamma_1}|\chi_{k} |W_{\Gamma,\Gamma_1}| \leq |W_{\Gamma,\Gamma_1}|^2\leq  W_{\Gamma,\Gamma_1}^2$ in the latter. We insert another identity and estimate
\begin{align}\label{abc}
 \eqref{abc2}%\big\|\chi_{k} W_{\Gamma,\Gamma_1}^2\chi_{k}\big\|_{p\tau/4}^{1/4}
 \leq \sum_{j\in\Z^d}  \big\|\chi_{k} W_{\Gamma,\Gamma_1}\chi_jW_{\Gamma,\Gamma_1} \chi_{k}\big\|_{p\tau/4}^{1/4}
 \leq \sum_{j\in\Z^d}  \big\|\chi_{k} W_{\Gamma,\Gamma_1}\chi_j\big\|_{p\tau/2}^{1/4}\big\|\chi_j W_{\Gamma,\Gamma_1} \chi_{k}\big\|_{p\tau/2}^{1/4},
\end{align}
where we implicitly assumed here that $p\tau/4 \ge 1$. Otherwise, one has to use \eqref{ptriangle} and adapt the exponents accordingly. 
Since we assumed $p \tau>d$, we obtain that $p\tau/2>d/2$. Hence, we can apply Lemma \ref{belwo} and obtain
\beq\label{abc4}
\eqref{abc}\leq  C_1 \sum_{j\in\Z^d}  e^{-c_1\dist(k,\Gamma_1)} e^{-c_1\dist(\Gamma_1,j)} e^{-c_1\dist(j,\Gamma_1)} e^{-c_1\dist(\Gamma_1,k)}\leq C_2 e^{-c_2\dist(\Gamma_1,k)}
\eeq
for some constants $c_1,c_2,C_1,C_2>0$ independent of all relevant quantities. Moreover, since $p\tau>d/2$, Lemma \ref{lem:SchattenCombesThomas} implies for $j,k\in\Z^d$ that 
\beq\label{abc3}
\big\|\chi_k \frac 1 {H_{L,l,\Gamma_2} +1} \chi_j\|_{p \tau}\leq C_3 e^{-c_3 |j-k|}.
\eeq
for constants $c_3,C_3>0$ independent of all relevant quantities. 
Inserting \eqref{abc4} and \eqref{abc3} in \eqref{abc5} implies 
\beq
\eqref{abc5} \leq C_4 e^{-c_4 \dist(x,\Gamma_1)}
\eeq
for constants $c_4,C_4>0$ independent of all relevant quantities. 

The remaining term in \eqref{pf:lm2:main} is estimated along the very same lines. This gives the assertion. 
\end{proof}

\section{Bound on the spectral shift function and the Birman-Schwinger principle}

%In this section we write SSF for the spectral shift function. 

\begin{lemma}\label{app:lem}
Let $A,B$ be self-adjoint operators in a Hilbert space $\H$ with purely discrete spectra and we assume $W:=B-A\in\S^p$ for some $0<p\leq 1$. Then for all $E\in\R$ 
\beq\label{app:lm}
\big|\xi(E,A,B)\big| \leq \liminf_{\substack{z\to E\\ \Im z\neq 0}}\, \big\| |W|^{1/2} \frac 1 {A-z} |W|^{1/2} \big\|_p^p,
\eeq
where we allow the right hand side to be $\infty$. 
\end{lemma}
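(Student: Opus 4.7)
The plan is to represent $|\xi(E,A,B)|$ via the argument of a perturbation determinant, rewrite that determinant in a form symmetrised around $|W|^{1/2}$, and then estimate it factor by factor using the algebraic eigenvalues of a Birman--Schwinger type operator.

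First I would invoke the Krein formula for the SSF. Since $W\in\S^p\subset\S^1$ (because $p\le 1$) and $A$ is self-adjoint, $W(A-z)^{-1}\in\S^1$ for every $z\notin\spec(A)$, so the perturbation determinant $D(z):=\det(I+W(A-z)^{-1})$ is well defined and analytic on $\C\setminus\spec(A)$ with $D(z)\to 1$ as $|\Im z|\to\infty$. The classical identity reads
\[ \pi\,\xi(E,A,B)=\lim_{\eps\searrow 0}\arg D(E+i\eps) \]
at every $E\in\R\setminus(\spec(A)\cup\spec(B))$, with $\arg$ taken along the continuous branch normalised by $\arg D(iy)\to 0$ as $y\to +\infty$. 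Next I would symmetrise: writing $W=|W|^{1/2}J|W|^{1/2}$ with $J:=\sign(W)$ a self-adjoint contraction that commutes with $|W|^{1/2}$, cyclicity of the Fredholm determinant gives $D(z)=\det(I+JK_z)$ with $K_z:=|W|^{1/2}(A-z)^{-1}|W|^{1/2}\in\S^p$. Denote the algebraic eigenvalues of the compact operator $JK_z$ by $(\lambda_n(z))_n$, so that $D(z)=\prod_n(1+\lambda_n(z))$, absolutely convergent.

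The key analytic input is the elementary pointwise bound
\[ |\arg(1+\lambda)|\le \pi\,|\lambda|^p\qquad\forall\,\lambda\in\C\setminus\{-1\},\ p\in(0,1]. \]
For $|\lambda|\ge 1$ this is simply $|\arg(1+\lambda)|\le\pi\le\pi|\lambda|^p$; for $|\lambda|<1$ the set $\{1+\mu:|\mu|\le|\lambda|\}$ lies in the open right half-plane, so the tangent-line estimate gives $|\arg(1+\lambda)|\le \arcsin|\lambda|\le (\pi/2)|\lambda|\le \pi|\lambda|^p$, using $|\lambda|\le|\lambda|^p$ for $|\lambda|\le 1,\ p\le 1$. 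Summing along the continuous branch yields $|\arg D(z)|\le \pi\sum_n|\lambda_n(z)|^p$. Weyl's inequality $\sum_n|\lambda_n(T)|^p\le \|T\|_p^p$ together with the contraction property $\|J\|\le 1$ (hence $s_n(JK_z)\le s_n(K_z)$ and $\|JK_z\|_p\le \|K_z\|_p$) gives $\sum_n|\lambda_n(z)|^p\le \|JK_z\|_p^p\le \|K_z\|_p^p$. Combining the three steps and taking $\liminf$ as $z=E+i\eps\to E$ produces the stated inequality.

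The main obstacle I anticipate is the pointwise validity of the Krein representation at $E\in\spec(A)\cup\spec(B)$. This is sidestepped by observing that near an eigenvalue of $A$ the resolvent $(A-z)^{-1}$ blows up, forcing $\|K_z\|_p^p\to\infty$, so the asserted inequality is vacuous there; the case $E\in\spec(B)\setminus\spec(A)$ is handled symmetrically by exchanging the roles of $A$ and $B$ in the argument (using $B-A=-W$ in place of $W$). The remaining care is purely Schatten-class bookkeeping around the polar decomposition $W=|W|^{1/2}J|W|^{1/2}$.
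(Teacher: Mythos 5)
Your determinant route runs into the classical obstruction that makes pointwise SSF bounds delicate for perturbations of indefinite sign --- and the $W=B-A$ relevant here is not sign-definite. The elementary inequality $|\arg(1+\lambda)|\le\pi|\lambda|^p$ concerns the \emph{principal} argument, whereas the Krein formula $\pi\xi(E)=\lim_{\eps\downarrow0}\arg D(E+i\eps)$ uses the branch of $\arg D$ obtained by continuation from $+i\infty$. At a fixed $z$ these two determinations can differ by a nonzero multiple of $2\pi$: as $z$ moves through the upper half-plane, an eigenvalue $\lambda_n(z)$ of $JK_z$ may cross the ray $(-\infty,-1)$, across which the principal argument of $1+\lambda_n$ jumps by $2\pi$ while the continuous branch of $\arg D$ does not. (If $J\ge0$ this cannot happen, since $\Im K_z\ge0$ keeps the $\lambda_n$ in the closed upper half-plane; that is precisely why determinant arguments of this type are usually confined to sign-definite perturbations.) So the step ``summing along the continuous branch yields $|\arg D(z)|\le\pi\sum_n|\lambda_n(z)|^p$'' is unjustified, and it is the heart of the matter rather than bookkeeping.

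Your treatment of $E\in\spec(A)$ is also incorrect. You claim $\|K_z\|_p\to\infty$ there, but if $WP=0$ with $P:=\id_{\{E\}}(A)$, then $|W|^{1/2}$ annihilates the singular part $P/(E-z)$ of the resolvent on both sides, $K_z$ stays bounded in $\S^p$, and the right-hand side of \eqref{app:lm} is typically finite; the inequality then has genuine content at such $E$ and cannot be dismissed as vacuous. This is exactly the dichotomy the paper's proof is organized around: when $WP=0$ it replaces $A$ by $\widetilde A=P^\perp AP^\perp$ and, instead of tracking $\arg D$, bounds $|\xi(E,A,B)|$ by the number of couplings $\lambda\in(0,1]$ with $E\in\spec(A+\lambda W)$; the Birman--Schwinger principle converts this into counting eigenvalues of modulus $\ge1$ of $\sign(W)|W|^{1/2}(\widetilde A-E)^{-1}|W|^{1/2}$, which Weyl's inequality bounds by the $\S^p$ quasi-norm to the power $p$. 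That eigenvalue-crossing count is the branch-free version of what your $\arg D$ is trying to record. Only when $WP\neq0$ does the paper show that the right-hand side diverges (via a rank-one lower bound on the imaginary part of the trace), so only there is the inequality vacuous. A further, minor, point: even if repaired, your argument controls the liminf along the vertical approach $z=E+i\eps$ only, whereas \eqref{app:lm} asserts the smaller liminf over all non-real $z\to E$.
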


The latter lemma is well-known for all energies $E\in\R$ such that the limit $|W|^{1/2} \frac 1 {A-E-i \eps} |W|^{1/2} $  exists in $\S^p$ as $\eps\to 0$, see  \cite{MR1208792}. This follows from the Birman-Schwinger principle. Since we want the result for all energies allowing the right hand side to be $\infty$ as well, we give a proof for completeness. %Moreover, the proof below also extends to non-tangential limits onto the real axis as needed in \eqref{pf:eq1}. 

\begin{proof}
Let $E\in \R$ and $P:=P_{\{E\}}(A)$ be the orthogonal projection on the spectral subspace of $A$ corresponding to $E\in\R$. Obviously, $P=0$ if $E\in\rho(A)$.  
First we restrict ourselves to the case $W P= 0$ and we write $\wtilde A:= P^\perp A P^\perp$, where $P^\perp:=\id-P$. 
For $\lambda\in(0,1]$ we obtain 
\beq
\frac 1 {B_\lambda-E} = \frac 1 {\wtilde A-E} \Big(\id -  (B_\lambda- \wtilde A)\frac 1 {\wtilde A-E}\Big)^{-1},
\eeq
where $B_\lambda:= A+ \lambda V$. $W P=0$ implies $B_\lambda -\wtilde A= \lambda W + P A P^\perp +P^\perp A P=\lambda W$.
Now, $E$ is an eigenvalue of $B_\lambda$ if and only if $1$ is an eigenvalue of $\lambda W\frac 1 {\wtilde A-E}$. Since the spectrum of operators $R$ and $S$ satisfy $\sigma(RS)\setminus\{0\} = \sigma(SR)\setminus\{0\}$, the latter is equivalent to $\sign(W) |W|^{1/2} \frac 1 {\wtilde A-E} |W|^{1/2}$ having an eigenvalue $1/\lambda$. Considering the eigenvalues of $B_\lambda$ as functions of $\lambda$, we see that 
\begin{align}
\big|\xi(E,A,B)\big| \leq \#\big\{\lambda:\ E \in \sigma(B_\lambda)\big\} 
&=\# \big\{1/\lambda\in \sigma\big(\sign(W) |W|^{1/2} \frac 1 {\wtilde A-E} |W|^{1/2}\big) \big\}\notag\\
&\leq \big\||W|^{1/2} \frac 1 {\wtilde A -E} |W|^{1/2}\big\|_p^p
\label{12345}
\end{align}
 where  $p>0$ and we used $\|\sign(W)\|=1$ in the last inequality. Now
\beq
\eqref{12345} = \liminf_{\substack{z\to E\\ \Im z\neq 0}}\, \big\||W|^{1/2} \frac 1 {\wtilde A-z} |W|^{1/2}\big\|_p^p
=\liminf_{\substack{z\to E\\ \Im z\neq 0}}\, \big\||W|^{1/2} \frac 1 {A-z} |W|^{1/2}\big\|_p^p,
\eeq
where the first equality follows from $W\in \S^p $ and the strong convergence $\frac 1 {\wtilde A-z}\to\frac 1 {\wtilde A-E}$ as $z\to E$ and the second equality follows from $W P= 0$. 

In the case $W P\neq 0$ let $\varphi$ be such that $WP\varphi \neq 0$, $P\varphi =\varphi$ and $\|\varphi\|_2=1$. Then we have for $0<p\leq 1$ the lower bound
\begin{align}
\big\||W|^{1/2} \frac 1 {A-z} |W|^{1/2}\big\|_p^p &\geq \big\||W|^{1/2} \frac 1 {A-z} |W|^{1/2}\big\|^p_1\notag\\
& \geq \big| \Im \Tr \big( |W|^{1/2} \frac 1 {A-z} |W|^{1/2}\big)\big|^p.\label{appeq11}
\end{align}
Let $Q=|\varphi\>\<\varphi|$, $(a_n)_{n\in\N}$ and $(\varphi_n)_{n\in\N}$ be the eigenvalues and eigenvectors of $A$. We chose the eigenvectors such that $\varphi_1 =\varphi$. Then we obtain
\begin{align}
\eqref{appeq11}&\geq \big|  \Tr \big(Q |W|^{1/2}\Im \frac 1 {A-z} |W|^{1/2}Q\big)\big|^p\notag\\
&=\Big|\sum_{n\in\N} \Im  \frac 1 {a_n-z} \big|\<|W|^{1/2}\varphi,|W|^{1/2}\varphi_n\>\big|^2\Big|^p \label{appeq1}.
\end{align}
Since $\Im z\neq0 $ and $\Im  \frac 1 {a_n-z} \geq 0$ has the same sign for all $n\in\N$ we obtain 
\beq
\eqref{appeq1} \geq \frac 1 \eps \<|W|^{1/2}\varphi,  |W|^{1/2}\varphi\>
\eeq
and therefore $\liminf_{\substack{z\to E\\ \Im z\neq 0}} \big\||W|^{1/2} \frac 1 {A-z} |W|^{1/2}\big\|_p^p =\infty$. This trivially gives the bound \eqref{app:lm} in the case $W P\neq 0$.
\end{proof}

\section*{Acknowledgements}
M.G. thanks Sasha Sodin and Adrian Dietlein for interesting discussions on the topic and suggestions on an earlier version of the paper.

\newcommand{\etalchar}[1]{$^{#1}$}

%\bibliographystyle{mybib}
%\bibliography{bibliography-AOC}	

\end{document}